\newtheorem*{theorem*}{Theorem}
\newtheorem*{lem}{Lemma}
\newtheorem{definition}{Definition}
\newtheorem{remark}{Remark}
\newtheorem{theorem}{Theorem}
\newtheorem{lemma}{Lemma}
\newtheorem{corollary}{Corollary}
\newcommand{\diag}{\mathrm{diag}}
\newcommand{\R}{\mathbb{R}}
\newcommand{\dd}{\mathrm{d}}
\newcommand{\Img}{\mathrm{Im}}
\newcommand{\Ker}{\mathrm{Ker}}
\newcommand{\Rq}{\mathbb{R}^q}
\newcommand{\com}[1]{\textcolor{red}{#1}}
\newcommand{\eqnref}[1]{Eq. (\ref{#1})}
\newcommand{\Vss}{\mathcal{V}^{ss}}
\newcommand{\A}{A}
\newcommand{\UU}{U}
\newcommand{\uu}{u}
\newcommand{\VV}{U}
\newcommand{\G}{V}
\newcommand{\HH}{H}
\newcommand{\X}{\diag(x)}
\newcommand{\XX}{\diag \left( \frac{1}{x} \right)}
\newcommand{\aij}{\alpha_{i \rightarrow j} }
\newcommand{\aii}{\alpha_{i \rightarrow i} }
\newcommand{\aji}{\alpha_{j \rightarrow i} }
\newcommand{\ai}{\alpha_i }
\newcounter{savesection}
\newcounter{apdxsection}
\renewcommand\appendix{\par
  \setcounter{savesection}{\value{section}}%
  \setcounter{section}{\value{apdxsection}}%
  \setcounter{subsection}{0}%
  \gdef\thesection{\@Alph\c@section}}
\newcommand\unappendix{\par
  \setcounter{apdxsection}{\value{section}}%
  \setcounter{section}{\value{savesection}}%
  \setcounter{subsection}{0}%
  \gdef\thesection{\@arabic\c@section}}
\title{Cramer-Rao bound and absolute sensitivity in chemical reaction networks}
\author[1]{Dimitri Loutchko\thanks{d.loutchko@edu.k.u-tokyo.ac.jp}}
\author[2]{Yuki Sughiyama}
\author[1,3]{Tetsuya J. Kobayashi\thanks{http://research.crmind.net}}
\affil[1]{{Institute of Industrial Science, The University of Tokyo, 4-6-1, Komaba, Meguro-ku, Tokyo 153-8505 Japan.}}
\affil[2]{Graduate School of Information Sciences, Tohoku University, Sendai 980-8579, Japan}
\affil[3]{Department of Mathematical Informatics,
Graduate School of Information Science and Technology,
The University of Tokyo, Tokyo 113-8654, Japan.}
\affil[4]{Universal Biology Institute, The University of Tokyo,
7-3-1, Hongo, Bunkyo-ku, 113-8654, Japan.}
\date{}
\begin{document}

\maketitle

\begin{abstract}
Chemical reaction networks (CRN) comprise an important class of models to understand biological functions such as cellular information processing, the robustness and control of metabolic pathways, circadian rhythms, and many more.
However, any CRN describing a certain function does not act in isolation but is a part of a much larger network and as such is constantly subject to external changes.
In [Shinar, Alon, and Feinberg. "Sensitivity and robustness in chemical reaction networks." SIAM J App Math (2009): 977-998.], the responses of CRN to changes in the linear conserved quantities, called sensitivities, were studied in and the question of how to construct absolute, i.e., basis-independent, sensitivities was raised.
In this article, by applying information geometric methods, such a construction is provided.
The idea is to track how concentration changes in a particular chemical propagate to changes of all concentrations within a steady state.
This is encoded in the matrix of absolute sensitivities.
As the main technical tool, a multivariate Cramer-Rao bound for CRN is proven, which is based on the the analogy between quasi-thermostatic steady states and the exponential family of probability distributions.
This leads to a linear algebraic characterization of the matrix of absolute sensitivities for quasi-thermostatic CRN.
As an example, the core module of the IDHKP-IDH glyoxylate bypass regulation system is analyzed analytically and numerically by extensive random sampling of the concentration space.
The experimentally known findings for the robustness of the IDH enzyme are confirmed and a hidden symmetry at the level of distributions is revealed, providing a blueprint for the analysis of the robustness properties of CRNs.
\end{abstract}

\section{Introduction}
The theory of chemical reaction networks (CRNs) is an indispensable tool for the understanding of biochemical phenomena.
Models based on CRN have successfully been employed to gain insights into various biological functions such as cell signaling pathways \cite{gross2016, perez2018}, circadian clocks \cite{gonze2006,hatakeyama2012}, proofreading kinetics \cite{hopfield1974,banerjee2017}, and many more \cite{mikhailov2017,alon2019}.

The thorough mathematical basis of chemical reaction theory has enabled much of its continuous success in applications.
Originating in the fundamental work of Horn, Jackson, and Feinberg \cite{horn1972,feinberg1972complex,horn1972necessary,feinberg2019foundations}, modern CRN theory uses the theory of differential equations \cite{feinberg1980chemical,mincheva2008multigraph}, algebraic geometry \cite{craciun2009toric,craciun2022disguised,craciun2023structure}, graph theory \cite{craciun2006multiple,craciun2011graph}, homological algebra \cite{hirono2021}, and, in the recent years, information geometry.
By realizing that the vector of chemical concentration is a distribution on a finite set, it was possible to adapt techniques from information theory and statistics to CRN theory \cite{yoshimura2021,sughiyama2022hessian,kobayashi2023information}.
In particular, the equilibrium and quasi-thermostatic steady state manifolds have the same mathematical shape as exponential families from information geometry \cite{amari2016,kobayashi2022kinetic}.
This approach has improved the geometrical understanding of growing systems \cite{sughiyama2022chemical}, thermodynamic uncertainty relations and speed limits \cite{loutchko2023geometry}, and finite time driving \cite{loutchko2022}.

To be more precise, the stoichiometric polytopes of a CRN should be thought of as analogues of a probability simplex:
A concentration vector has linear quantities which are conserved by the internal CRN dynamics, analogous to the conservation of probability.
However, the linear conserved quantities of a CRN can be controlled and externally changed.
From the mathematical point of view, this gives the concentration space two distinguished directions:
The direction of internal dynamics within stiochiometric polytopes and the complementary direction which shifts the polytopes.
The main technical contribution of this article is to construct a Cramer-Rao bound for quasi-thermostatic CRNs which is adapted to detect changes in the latter direction.

In CRN theory, the influence of an infinitesimal change of the linear conserved quantities on the steady state concentrations is known as sensitivity analysis \cite{shinar2009}.
The understanding of sensitivity is a prerequisite for the understanding of concentration robustness \cite{barkai1997robustness,alon1999robustness,anderson2014stochastic}, absolute concentration robustness \cite{shinar2010structural}, perfect adaptation \cite{hirono2023,khammash2021perfect}, and for chemical switches \cite{acar2008stochastic,tyson2008biological,reyes2022numerical}.
Sensitivity has been extensively analyzed in the literature \cite{shinar2007input,shinar2009, shinar2009robustness,shinar2010structural,mochizuki2015sensitivity,fiedler2015sensitivity,okada2016,hirono2021,joshi2022foundations,meshkat2022absolute, joshi2023reaction,craciun2023structure,hirono2023robust,kaihnsa2024absolute,puente2024absolute}, yet, the problem remains that the sensitivity matrix is not an object that is intrinsic to the CRN but depends on the choice of a basis of $\Ker[S^T]$ (where $S$ is the stoichiometric matrix of the CRN).
This means that the respective numerical values of the sensitivity matrix elements have no immediate physical meaning as, for example, arbitrary scaling of the basis vectors of $\Ker[S^T]$ will lead to arbitrary scaling of the respective elements.
In this article, basis independent quantities, termed {\it absolute sensitivities}, are defined and analyzed.
The definition is given for any steady state manifold whenever locally there is a continuously differentiable parametrization of the manifold by the vector of conserved quantities.

The idea behind the definition of absolute sensitivity is as follows:
If, at a steady state, the concentration of a chemical $X_i$ is perturbed by an amount of $\delta x_i$, a new steady state is adopted by the CRN.
Thereby, the perturbation $\delta x_i$ distributes to concentration changes of all chemicals, prescribed by the coupling through the linear conserved quantities.
The absolute sensitivity $\aij$ of the chemical $X_j$ with respect to the chemical $X_i$ quantifies the fraction of infinitesimal concentration change of $X_j$ after this redistribution, i.e., the concentration change of the chemical $X_j$ is given by $\aij \delta x_i$, to first order in $\delta x_i$.
In particular, the diagonal terms $\alpha_i := \aii$ quantify the fraction of concentration that remains with the chemicals $X_i$ in the shifted steady sate.
The independence of the choice of basis of $\Ker[S^T]$ for the absolute sensitivities is not only aesthetically pleasing but has immediate consequences for the analysis of concentration robustness, negative feedback, and (hyper)sensitivity:
In this article, it is proven that $\ai \in [0,1]$ holds for quasi-thermostatic CRN.
However, for systems with non-ideal thermodynamics, $\ai <0$ and $\ai >1$ can occur, corresponding to self-regulation and hypersensitivity, respectively.
Moreover, absolute concentration robustness poses very strict conditions on the CRN model, for example it is not compatible with the complex balancing condition \cite{shinar2010structural}
By employing the concept of absolute sensitivity, it is possible to find weaker and therefore more practically feasible notions of concentration robustness.
It might often be of biological interest that the concentration of a certain chemical $X_i$ is insensitive to concentration changes of a particular chemical $X_j$, which is captured by the vanishing of $\aji$.
In contrast, absolute concentration robustness in $X_i$ requires the $\aji$ to vanish for all $i$.

A second remarkable feature of absolute sensitivities, in addition to the independence of the $\Ker[S^T]$ basis, is that they do not explicitly depend on kinetic parameters\footnote{But they do so implicitly because the steady state depends on the kinetic parameters determine.}.
This leads to efficient procedures for sampling the space of physiologically feasible concentration vectors and to obtain resulting distributions of absolute sensitivities which allow to infer general robustness properties of each of the chemicals.
The linear algebraic formalism developed in this article allows for computationally efficient implementations as is illustrated by using the core module of the IDHKP-IDH glyoxylate bypass regulation system as an example.
Moreover, the approach reveals a surprising symmetry at the level of distributions which is not presents at the level of individual concentration vectors.

It is worth mentioning that quantities related to absolute sensitivities have been defined in the past:
In \cite{feliu2019}, Feliu has introduced sign-sensitivities for CRNs to analyze whether the steady state concentrations of $X_i$ increases or decreases when the concentration of $X_j$ is perturbed.
In other words, Feliu has analyzed the signs of the $\aji$.
However, the work was done from the point of view of computational algebraic geometry and therefore the information geometrical interpretation of the $\aji$ as well as the exact formulae in the case of complex balanced systems, as derived in this paper, were not accessible in the previous work.

The article is structured as follows:
This introductory section contains a paragraph introducing the mathematical notation and then provides a summary of the main results.
The terminology for CRN is introduced in Section \ref{sec:CRN}.
The absolute sensitivity is defined in Section \ref{sec:abs_sens} and its basic properties, including the basis independence, are proven.
In general, a CRN does not allow for an explicit parametrization of the steady state manifold by the conserved quantities and thus no explicit expressions for the absolute sensitivities exist.
However, quasi-thermostatic steady state manifolds can be analyzed with the information geometric techniques which lead to the formulation of a Cramer-Rao bound and to a linear algebraic characterization of the absolute sensitivities as scalar products between the projections of the canonical vectors to a certain linear space.
This is presented in Section \ref{sec:CRB}.
In Section \ref{sec:example}, an example is given to illustrate the concepts developed in this article.

\subsection*{Mathematical notation}
\paragraph{Shorthand notation}
For a vector $x = (x_1,\dotsc,x_n)^T \in \R^n$, functions are defined componentwise, i.e., $\exp x = (\exp x_1, \dotsc, \exp x_n)^T \in \R^n$.
The Hadamard product $x \circ y$ between $x, y \in \R^n$ is defined by componentwise multiplication as $x \circ y = (x_1y_1, \dotsc, x_ny_n)^T \in \R^n$.
\paragraph{Linear maps}A $n \times m$ matrix $L$ is identified with a linear map $L : \R^m \rightarrow \R^n$ as $e_i \mapsto \sum_{j=1}^n L_{ij}e'_j$, where $e_i$ and $e'_j$ are the respective canonical basis vectors of $\R^m$ and $\R^n$.
The transpose matrix $L^T$ defines a map $L^T: \R^n \rightarrow \R^m$ via $e'_j \mapsto \sum_{i=1}^m L_{ij} e_i$.
\paragraph{Bilinear products}
On the vector space $\R^n$, the standard bilinear product $\langle.,.\rangle: \R^n \times \R^n \rightarrow \R$ is given by $\langle x,y\rangle = x^T y = \sum_{i=1}^n x_i y_i$ for $x,y \in \R^n$.
For any positive semidefinite $n \times n$ matrix $g$, the bilinear product $\langle.,.\rangle_g: \R^n \times \R^n \rightarrow \R$ is defined as $\langle x,y\rangle_g := x^Tgy = \sum_{i,j=1}^n x_i g_{ij} y_j$.
\paragraph{The Jacobian}
For a map $f: M \rightarrow N$ between manifolds with (local) coordinate systems denoted by $(x_i)$ and $(y_j)$ respectively, the Jacobian at $m \in M$ is the linear map $D_m f: T_mM \rightarrow T_{f(m)} N$ between tangent spaces induced by $f$.
When both the map $f$ and the point $m$ are clear from the context, the Jacobian is also written as $$D_mf = \frac{\partial y}{\partial x}.$$

\subsection*{Summary of the main results}

Consider a CRN with chemicals $X_1,\dotsc,X_n$, let $S$ denote the stiochiometric matrix, and $X := \R^n_{>0}$ the concentration space with concentration vectors $(x_1,\dotsc,x_n)^T \in X$.
The steady state manifold $\Vss$ is the zero locus for the deterministic CRN dynamics $\frac{\dd x}{\dd t}= Sj$ with flux vector $j$.
Let $U$ be a $n \times q$ matrix of a complete set of basis vectors $u_1,\dotsc,u_q$ for $\Ker[S^T]$ which yields the vector of linear conserved quantities corresponding to the point $x \in X$ as $\eta = (\eta_1,\dotsc,\eta_q)^T = U^T x$.
It is assumed that there is a map $\beta$ which parameterizes the steady state manifold by the vector of conserved quantities as $x = \beta(\eta) \in \Vss$ (in other words, $\beta$ is a section of the map $U^T : X \rightarrow \Rq$ with $\Img[\beta] \subset \Vss$).
The sensitivity at a point $x \in \Vss$ is given by the matrix $\chi$ with elements $\chi_{ij} = \frac{\partial x_i}{\partial \eta_j}$, which is the Jacobian of the map $\beta$.
These basic concepts are introduced in Section \ref{sec:CRN}.

In Section \ref{sec:abs_sens}, the {\it absolute sensitivity} $\aij$ of a chemical $X_j$ with respect to $X_i$ is defined at a point $x = \beta(\eta) \in \Vss$.
It quantifies the concentration change of $X_j$ in the steady state caused by a concentration change $\delta x_i$ of $X_i$, to first order in $\delta x_i$.
This concentration leads to a change in the vector of conserved quantities $\Delta \eta$ given by $\Delta \eta_j = \sum_{i=1}^n u_{ij} \delta x_i$ and to the adjusted steady state
\begin{equation*}
    \beta(\eta + \Delta \eta) = \beta(\eta) + D_{\eta} \beta (\Delta \eta) + \mathcal{O}(\|\Delta \eta \|^2) = x + D_{\eta} \beta (U^T e_i) \delta x_i + \mathcal{O}(\delta x_i^2).
\end{equation*}
The absolute sensitivity $\aij$ is the $j$th component of the linear term $D_{\eta} \beta (U^T e_i)$ in Definition \ref{def:alpha_i}, which is explicitly given by
\begin{align*}
    \aij:= \sum_{k=1}^q \frac{\partial x_j}{\partial \eta_k} u_{ik}.
\end{align*}
The $n \times n$ matrix $\A$ of absolute sensitivities is given by $\A_{ij} = \aji$.
A diagonal element $\alpha_i := \aii$ and is called the {\it absolute sensitivity of the chemical $X_i$}.
The advantage of the absolute sensitivity over the classical sensitivity $\chi$ is stated in Theorem \ref{thm:abs_sens1}, which reads:
\begin{theorem*}
The matrix of absolute sensitivities $\A$ is independent of the choice of a basis of $\Ker[S^T]$.
Moreover, the equality
\begin{align*}
   \mathrm{Tr}[\A] = \sum_{i=1}^n \alpha_i = q
\end{align*}
holds, whereby $q = \dim \Ker[S^T]$.
\end{theorem*}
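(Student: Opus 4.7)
The plan is to repackage both claims as statements about the matrix product $\chi U^T$, where $\chi$ is the $n\times q$ Jacobian of $\beta$ and $U$ is the $n\times q$ matrix of basis vectors of $\Ker[S^T]$. First I would rewrite the defining formula
$$
\aij = \sum_{k=1}^q \frac{\partial x_j}{\partial \eta_k} u_{ik}
$$
in matrix form as $\A_{ij}=\alpha_{j\to i}=(\chi U^T)_{ij}$, i.e.\ $\A = \chi U^T$. Everything then reduces to two clean linear-algebraic observations.

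For the basis independence, I would introduce an arbitrary change of basis of $\Ker[S^T]$ by an invertible $q\times q$ matrix $G$, writing $U' = UG$. From $\eta = U^T x$ and $\eta' = (U')^T x = G^T\eta$, the parametrization transforms as $\beta'(\eta') = \beta((G^T)^{-1}\eta')$, so by the chain rule the new Jacobian is $\chi' = \chi (G^T)^{-1}$. Consequently
$$
\A' = \chi'(U')^T = \chi (G^T)^{-1} G^T U^T = \chi U^T = \A,
$$
which proves the first assertion.

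For the trace identity, I would exploit that $\beta$ is by assumption a section of $U^T$, i.e.\ $U^T\beta(\eta) = \eta$ for all $\eta$ in the relevant domain. Differentiating this identity with respect to $\eta$ gives
$$
U^T \chi = I_q
$$
on $\R^q$. Then the cyclic property of the trace yields
$$
\mathrm{Tr}[\A] = \mathrm{Tr}[\chi U^T] = \mathrm{Tr}[U^T \chi] = \mathrm{Tr}[I_q] = q,
$$
and summing the diagonal entries $\A_{ii}=\alpha_i$ gives $\sum_i\alpha_i = q$ as claimed.

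There is no genuine obstacle here; the only delicate point is bookkeeping of transposes and of how a change of basis in $\Ker[S^T]$ induces a dual change in the $\eta$-coordinates and hence an inverse change in $\chi$. If one keeps track carefully that $U'=UG$ forces $\eta' = G^T\eta$ and $\chi' = \chi(G^T)^{-1}$, the cancellation in $\chi'(U')^T$ is automatic, and the identity $U^T\chi = I_q$ is precisely the statement that $\beta$ is a section of $U^T$.
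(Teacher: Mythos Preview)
Your proof is correct and follows essentially the same approach as the paper. Both arguments rewrite $\A=\chi U^T$, verify basis independence via the chain rule under $U'=UG$ (the paper uses $B$ in place of your $G$), and obtain the trace identity from the section property $U^T\circ\beta=\mathrm{id}$; the only cosmetic difference is that you invoke the cyclic property of the trace on $U^T\chi=I_q$, whereas the paper performs the equivalent index summation directly.
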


Section \ref{sec:CRB} treats the case of quasi-thermostatic CRN with methods from information theory.
Quasi-thermostatic CRN are characterized by the particular form
\begin{equation*}
    \Vss = \{ x \in X | \log x - \log x^{ss} \in \Ker[S^T] \}
\end{equation*}
of the steady state manifold and they allow a parametrization akin to the exponential family of probability distributions.
This analogy to information theory leads to a multivariate Cramer-Rao bound
\begin{equation*}
    \mathrm{Cov}(I_n) \geq \XX \A,
\end{equation*}
which is understood in the sense that the difference between the two matrices $\mathrm{Cov}(I_n)$ and $\XX \A$ is positive semidefinite.
This is stated and proven in Section \ref{sec:CRB_derivation}, Theorem \ref{thm:CRB}\footnote{The theorem states a more general version for the covariance of an arbitrary matrix $V$ instead of the identity matrix $I_n$ but this is not discussed in this introductory section.}.
Here, the covariance matrix elements are defined as $\mathrm{Cov}(I_n)_{ij}$ $ = \langle e_i - \bar{e}_i, e_j - \bar{e}_j \rangle_{\frac{1}{x}}$, where the bilinear form $\langle ., . \rangle_{\frac{1}{x}}$ is given by $\langle v,w \rangle_{\frac{1}{x}} := \sum_{i=1}^n  \frac{1}{x_i} v_i w_i.$, the $e_i$ are the canonical unit vectors, and the $\bar{e}_i$ are arbitrary vectors in $\Img[S]$.

In Section \ref{sec:tightening}, the Cramer-Rao bound is tightened by tuning the $\bar{e}_i$.
This leads to the linear-algebraic characterization of the absolute sensitivities $\alpha_i$ in Lemma \ref{lem:abs_sens2}:
\begin{lem}
For a quasi-thermostatic CRN, the absolute sensitivity $\alpha_i$ of the chemical $X_i$ at a point $x =(x_1,\dotsc,x_n)$ is given by
\begin{align*}
    \alpha_i = x_i \lVert \pi(e_i) \rVert_{\frac{1}{x}}^2,
\end{align*}
where $e_i$ is the $i$th canonical unit vector.
\end{lem}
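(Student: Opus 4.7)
The plan is to tighten the multivariate Cramer-Rao bound $\mathrm{Cov}(I_n) \geq \XX \A$ of Theorem \ref{thm:CRB} on its diagonal entries by an optimal choice of the auxiliary vectors $\bar{e}_i \in \Img[S]$. Since $\mathrm{Cov}(I_n)_{ii} = \langle e_i - \bar{e}_i, e_i - \bar{e}_i \rangle_{\frac{1}{x}}$, the minimizer $\bar{e}_i^*$ over $\Img[S]$ is precisely the orthogonal projection of $e_i$ onto $\Img[S]$ with respect to the bilinear form $\langle \cdot, \cdot \rangle_{\frac{1}{x}}$, and is characterized by the condition that $e_i - \bar{e}_i^*$ be perpendicular to $\Img[S]$ under $\langle \cdot, \cdot \rangle_{\frac{1}{x}}$.

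The next step is to identify the $\langle \cdot, \cdot \rangle_{\frac{1}{x}}$-orthogonal complement of $\Img[S]$ in $\R^n$ with the tangent space $T_x \Vss = \X \Ker[S^T]$ (the tangent space identification follows from differentiating the quasi-thermostatic parametrization $x = x^{ss} \circ \exp(Uv)$ in $v$). The orthogonality is established by the short computation
\begin{equation*}
    \langle \X u, Sw \rangle_{\frac{1}{x}} = (Sw)^T \XX \X u = w^T S^T u = 0
\end{equation*}
valid for every $u \in \Ker[S^T]$ and $w$, combined with a dimension count. Hence the operator $\pi$ appearing in the lemma must be the $\langle \cdot, \cdot \rangle_{\frac{1}{x}}$-orthogonal projection onto $\X \Ker[S^T]$, and the tightened covariance becomes $\mathrm{Cov}(I_n)_{ii} = \lVert e_i - \bar{e}_i^* \rVert_{\frac{1}{x}}^2 = \lVert \pi(e_i) \rVert_{\frac{1}{x}}^2$.

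What remains is to verify that the inequality of Theorem \ref{thm:CRB} is saturated on the diagonal at the optimal $\bar{e}_i^*$. Theorem \ref{thm:CRB} is essentially a Cauchy-Schwarz estimate whose two factors are $e_i - \bar{e}_i$ and the score vectors $\X u_k$ that span $T_x \Vss$; equality is attained precisely when $e_i - \bar{e}_i$ already lies in this span, which is the defining property of $\pi(e_i)$. Once equality of the $(i,i)$ entries is established, the lemma follows from $\lVert \pi(e_i) \rVert_{\frac{1}{x}}^2 = (\XX \A)_{ii} = \alpha_i / x_i$. The main obstacle is to check that the positive semidefinite residual $\mathrm{Cov}(I_n) - \XX \A$ actually has vanishing $(i,i)$ entry at $\bar{e}_i^*$ rather than only a nonnegative one; unpacking the equality case of the Cauchy-Schwarz step in the proof of Theorem \ref{thm:CRB} is the technical core of the argument, and as a sanity check one can match the resulting formula against the direct computation $\alpha_i = x_i \bigl( U (U^T \X U)^{-1} U^T \bigr)_{ii}$ obtained from $\frac{\partial x}{\partial \eta} = \X U (U^T \X U)^{-1}$.
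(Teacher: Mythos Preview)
Your approach is correct and follows a genuinely different route from the paper's proof. The paper establishes the inequality $x_i \lVert \pi(e_i) \rVert_{\frac{1}{x}}^2 \geq \alpha_i$ via the Cramer--Rao bound exactly as you do, but then upgrades it to an equality by a global trace argument: summing over all $i$, the left-hand side becomes $\sum_i x_i \lVert \pi(e_i) \rVert_{\frac{1}{x}}^2 = \sum_j \lVert v_j \rVert_{\frac{1}{x}}^2 = q$ (by expanding each $\pi(e_i)$ in a $\langle\cdot,\cdot\rangle_{\frac{1}{x}}$-orthonormal basis $\{v_j\}$ of $\X\Ker[S^T]$), while the right-hand side equals $q$ by Theorem~\ref{thm:abs_sens1}. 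Since a sum of nonnegative gaps vanishes, each individual gap must.

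By contrast, you argue locally for each $i$ by returning to the Cauchy--Schwarz step inside the proof of Theorem~\ref{thm:CRB} and verifying that it saturates at the optimal $\bar e_i^{*} = e_i - \pi(e_i)$. This does work: with $a = e_i$ and $b = g_{\eta} U^T e_i$, the two Cauchy--Schwarz vectors $\XX^{1/2}(e_i - \bar e_i^{*})$ and $\X^{1/2} U b$ are in fact equal (not merely proportional), because $\pi(e_i) = \X U g_{\eta} U^T e_i$. Your route has the merit of explaining \emph{why} equality holds at each index rather than inferring it from a global count; the paper's route is slicker in that it never needs to reopen the proof of Theorem~\ref{thm:CRB}. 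Incidentally, what you call a ``sanity check'' --- matching against $\alpha_i = x_i\bigl(U(U^T \X U)^{-1} U^T\bigr)_{ii}$ together with the recognition that $A = \X U(U^T \X U)^{-1} U^T$ is precisely the matrix of the projection $\pi$ --- is by itself a complete and self-contained proof of the lemma, essentially the content of Remark~\ref{rmk:geom2}.
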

Based on this lemma, the final Theorem \ref{thm:matrix_abs_sens} on the matrix of absolute sensitivities is proven:
\begin{theorem*}
    For a quasi-thermostatic CRN, the matrix of absolute sensitivities $\A$ at a point $x \in \Vss$ is given by
\begin{equation*}
    \A = \X \mathrm{Cov}(I_n)
\end{equation*}
with $\mathrm{Cov}(I_n)_{ij} = \langle \pi(e_i),\pi(e_j)\rangle_{\frac{1}{x}}$, where $\pi: \R^n \rightarrow \X \Ker[S^T]$ is the $\langle ., . \rangle_{\frac{1}{x}}$-orthogonal projection to $\X \Ker[S^T]$.
\end{theorem*}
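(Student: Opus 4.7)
The plan is to identify the linear map $e_i \mapsto D_\eta \beta(U^T e_i)$ on $\R^n$ — whose $j$-th component is $\aij$ by definition — with the $\langle .,.\rangle_{\frac{1}{x}}$-orthogonal projection $\pi$ onto $\X \Ker[S^T]$. Once this identification is in place, the formula for $\A$ follows from a short calculation that rewrites the $j$-th component of $\pi(e_i)$ as a $\langle .,.\rangle_{\frac{1}{x}}$-inner product against $\pi(e_j)$.

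First I would determine the tangent space to $\Vss$ at $x$. Differentiating the quasi-thermostatic defining relation $\log x - \log x^{ss} \in \Ker[S^T]$ along a curve in $\Vss$ gives $T_x\Vss = \X \Ker[S^T]$. With respect to the weighted form $\langle .,.\rangle_{\frac{1}{x}}$, the orthogonal complement of this subspace is $\Img[S]$: for $v \in \Ker[S^T]$ one has $\langle w, \X v\rangle_{\frac{1}{x}} = \langle w, v\rangle$, so orthogonality to $\X \Ker[S^T]$ in the weighted form amounts to standard orthogonality to $\Ker[S^T]$, which is precisely $\Img[S]$. Since $U$ spans $\Ker[S^T]$, the relation $S^T U = 0$ forces $U^T$ to annihilate $\Img[S]$. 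Moreover, the restriction $U^T\!\mid_{\X \Ker[S^T]}: \X\Ker[S^T] \to \Rq$ is a linear isomorphism whose inverse is $D_\eta \beta$ — this is the content of $\beta$ being a section of $U^T$ with image in $\Vss$. Decomposing $e_i = \pi(e_i) + (e_i - \pi(e_i))$ into its $\X \Ker[S^T]$ and $\Img[S]$ summands, it follows that $D_\eta \beta(U^T e_i) = D_\eta \beta(U^T \pi(e_i)) = \pi(e_i)$, establishing the identification.

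The computation of $\A_{ij}$ is then immediate. By definition and the identification just obtained, $\A_{ij} = \aji = (\pi(e_j))_i$. Writing this component as $x_i \langle e_i, \pi(e_j)\rangle_{\frac{1}{x}}$ and replacing $e_i$ by $\pi(e_i)$ — permissible because the residual $e_i - \pi(e_i) \in \Img[S]$ is $\langle .,.\rangle_{\frac{1}{x}}$-orthogonal to $\pi(e_j) \in \X \Ker[S^T]$ — yields $\A_{ij} = x_i \langle \pi(e_i), \pi(e_j)\rangle_{\frac{1}{x}} = (\X \mathrm{Cov}(I_n))_{ij}$, as required. Specialising to $i = j$ recovers Lemma \ref{lem:abs_sens2}, and the trace identity $\mathrm{Tr}[\A] = \mathrm{Tr}[\pi] = \dim \X \Ker[S^T] = q$ recovers the diagonal sum from the Theorem on absolute sensitivities, both serving as internal consistency checks.

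The principal obstacle is the identification $D_\eta \beta \circ U^T = \pi$. It hinges on the specific duality between the concentration-space bilinear form $\langle .,.\rangle_{\frac{1}{x}}$ and the decomposition $\R^n = \X \Ker[S^T] \oplus \Img[S]$, which in turn encodes the exponential-family-like geometry of quasi-thermostatic steady state manifolds; in particular it is specific to the quasi-thermostatic setting and would fail without the logarithmic form of $\Vss$. Once this geometric point is pinned down, the remainder reduces to bookkeeping with orthogonal projections. An alternative route would be to polarise Lemma \ref{lem:abs_sens2} directly, but the direct route above appears both shorter and more conceptually transparent.
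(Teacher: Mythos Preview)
Your argument is correct, but it follows a genuinely different route from the paper's own proof. The paper proves Theorem~\ref{thm:matrix_abs_sens} by first invoking the Cramer--Rao bound (Theorem~\ref{thm:CRB}) to conclude that $\mathrm{Cov}(I_n) - \XX \A$ is positive semidefinite, then using Lemma~\ref{lem:abs_sens2} to show that its diagonal entries vanish, and finally appealing to the fact that a positive semidefinite matrix with zero diagonal is identically zero (via Cholesky). You bypass all of this by establishing directly that $D_\eta\beta \circ U^T = \pi$ as linear maps on $\R^n$, from which the matrix identity $\A = \X\,\mathrm{Cov}(I_n)$ follows by a one-line computation. Your approach is shorter, self-contained, and does not need the Cramer--Rao machinery at all; indeed, it yields Lemma~\ref{lem:abs_sens2} as a special case rather than relying on it. The paper is aware of this alternative: essentially the same geometric derivation appears in Remark~\ref{rmk:geom2}, presented there as a conceptual complement to the Cramer--Rao route. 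What the paper's approach buys is the narrative link to the Cramer--Rao inequality, which is one of the article's advertised contributions; what your approach buys is a cleaner and more transparent proof of the theorem itself.
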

The space $\X \Ker[S^T]$ is actually the tangent space $T_x \Vss \subset T_xX$ and thus $\pi$ is the projection of a tangent vector to the tangent space $T_x \Vss$.
Moreover, the canonical basis vectors $e_i$ appearing in the linear algebraic calculations turn out to be the canonical basis vectors $\frac{\partial}{\partial x_i}$ of the tangent space $T_xX$.
An illustration of this geometry is shown in Fig. \ref{fig:intro} and more background is discussed in general in Remarks \ref{rmk:geometry} and \ref{rmk:geometry_2} and for the case of quasi-thermostatic CRN in Remark \ref{rmk:geom2}.
\begin{figure}
    \centering
    \includegraphics[scale=0.25]{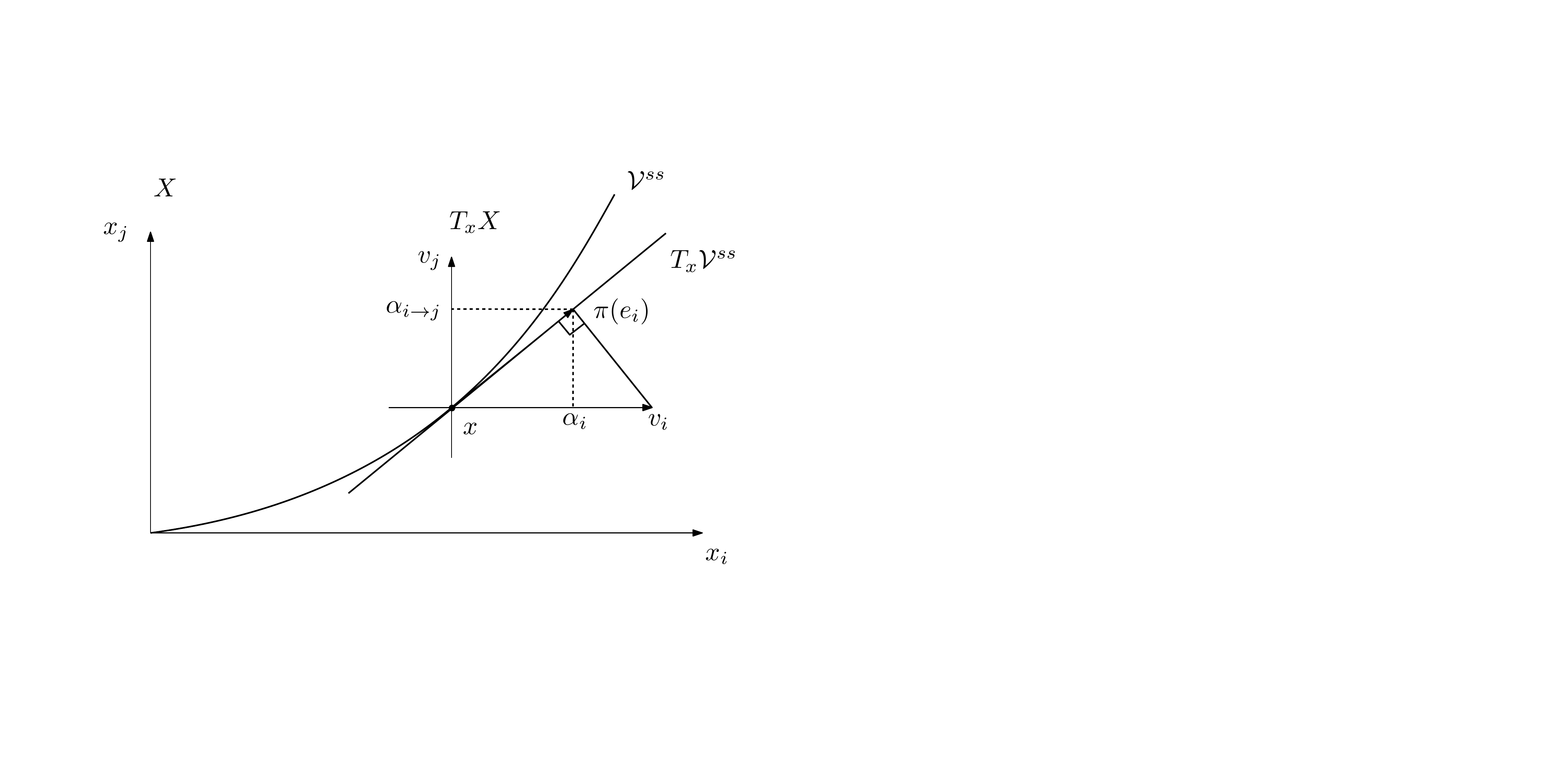}
    \caption{Illustration of the geometrical background for the case of quasi-thermostatic CRN.
    The concentration space $X$ is shown in two dimensions and the steady state manifold $\Vss$ is represented by a one-dimensional curve.
    The linear algebra takes place on the tangent spaces $T_x X$ and $T_x \Vss \subset T_x X$ at a given point $x \in X$.
    The vectors $e_i$ are the canonical basis vectors $\frac{\partial}{\partial x_i}$ of $T_xX$, and $\pi(e_i)$ is the $\langle ., . \rangle_{\frac{1}{x}}$-orthogonal projection of $e_i$ to $T_x \Vss$.
    The absolute sensitivities $\aij$ are the respective components of the vector $\pi(e_i)$.
    }
    \label{fig:intro}
\end{figure}

\section{Chemical reaction networks} \label{sec:CRN}

\subsection{General notions}

A chemical reaction network (CRN) is determined by the $n$ chemicals $X_1,\dotsc,X_n$ and $r$ reactions $R_1,\dotsc,R_r$ given by
\begin{align*}
    R_j: \sum_{i=1}^n S^-_{ij} X_i \rightarrow \sum_{i=1}^n S^+_{ij} X_i
\end{align*}
with nonnegative integer coefficients $S^+_{ij}$ and $S^-_{ij}$.
These stoichiometric coefficients determine the reactants and products of the reaction and the structure of the network is encoded in the $n \times r$ stoichiometric matrix $S = (S_{ij})$ with matrix elements
\begin{align*}
    S_{ij} = S^+_{ij}  - S^-_{ij}.
\end{align*}
The state of the CRN is given by the vector of positive concentration values $$x = (x_1,\dotsc,x_n)^T \in \mathbb{R}^n_{>0}, $$ where $x_i$ represents the concentration of the chemical $X_i$.
The state space is called concentration space and is denoted by $X := \mathbb{R}^n_{>0}$.
The dynamics of the CRN is governed by the equation
\begin{align*}
    \frac{\dd x}{\dd t} = Sj(x),
\end{align*}
where $j = (j_1,\dotsc,j_r)^T \in \mathbb{R}^r$ is the vector of reaction fluxes.
The specification of $j$ as a function of $x$ is tantamount to the choice of a kinetic model, with mass action kinetics being the most common one.
A state $x \in X$ which satisfies $\frac{\dd x}{\dd t} = Sj(x) = 0$ is called a steady state of the CRN.
The set
\begin{align*}
    \Vss = \{ x \in X | Sj(x) = 0 \}
\end{align*}
is assumed to be  a manifold (possibly with singularities). It is called the steady state manifold.

\subsection{Conserved quantities and sensitivity} \label{sec:sensitivity}
For any vector $\uu \in \Ker[S^T]$, the quantity $\langle \uu, x \rangle$ is conserved by the reaction dynamics, which follows from
\begin{align*}
    \frac{\dd \langle \uu, x \rangle}{\dd t} = \langle \uu, Sj(x) \rangle = \langle S^T \uu, j(x) \rangle = 0.
\end{align*}
Let $q$ denote the dimension of $\Ker[S^T]$, choose a basis $\{\uu_i\}_{i=1}^q$ of $\Ker[S^T]$, and write $\UU = (\uu_1,\dotsc,\uu_q)$ for the respective $n \times q$ matrix of basis vectors.
This yields the map
\begin{equation} \label{eq:eta_def}
    U^T : X \rightarrow \R^q.
\end{equation}
The vector $\eta := \UU^T x \in \R^q$ is conserved by the reaction dynamics and is called the vector of {\it conserved quantities} in CRN theory.
Conserved quantities result, for example, from the conservation of mass and larger molecular residues such as amino acids throughout all reactions of the CRN.
For any initial condition $x_0 \in X$ with $\eta := \UU^T x_0$, the reaction dynamics is confined to the {\it stoichiometric polytope}, which is defined as
\begin{align*}
    P(\eta) := \{ x \in \R^n_{\geq 0} | \UU^T x = \eta \}.
\end{align*}
The range of physically meaningful parameters $\eta$ is given by
\begin{align*}
\HH := \UU^T X \subset \R^q,
\end{align*}
which is an open submanifold of $\R^q$ of full dimension.
This gives a fibration of the concentration space $X$ by the stoichiometric polytopes $P(\eta)$ with the base space $H$.

If, locally at $x \in \Vss$, the map $U^T: \Vss \rightarrow H$ has a differentiable inverse, then the $n \times q$ {\it sensitivity matrix} $\chi$ with matrix elements
\begin{align} \label{eq:chi}
    \chi_{ij} = \frac{\partial x_i}{\partial \eta_j}
\end{align}
is well-defined.
It quantifies the infinitesimal change of the concentration values around $x$ with respect to infinitesimal changes in the values of the conserved quantities $\eta$.
The numerical values $\chi_{ij}$, however, depend on the choice of a basis for $\Ker[S^T]$ and are not absolute physical quantities.
It is the main purpose of this article to define sensitivities which are independent of choice of basis and study their properties.
The definition is given in the next section for the most general case which requires a locally differentiable parametrization of $\Vss$ by the conserved quantities but makes no further assumptions.
For quasi-thermostatic CRN more explicit results are available.
They are presented in Section \ref{sec:CRB}.

\section{Absolute sensitivity} \label{sec:abs_sens}

Absolute sensitivities are local quantities at a point $x \in \Vss \subset X$ which measure the sensitivity of the steady state concentrations $x_1,x_2,\dotsc,x_n$ with respect to infinitesimal concentration changes of the chemicals.
The definition requires that locally at $x$, the map $U^T : X \rightarrow H$ has a continuously differentiable inverse with image in $\Vss$.
If this is not the case, absolute sensitivities are not well-defined and one needs to restrict the parameter space $H$ accordingly.
This happens, for example, at bifurcation points.
Therefore, let $\tilde{H} \subset H$ be a submanifold of $H$ such that there is a differentiable section
\begin{equation*}
    \beta: \tilde{H} \rightarrow X
\end{equation*}
to $U^T: X \rightarrow H$ which satisfies $\Img[\beta] \subset \Vss$ and is a local inverse to $U^T: \Vss \rightarrow H$.
For example, if $x$ is a nondegenerate steady state, the existence of the section follows from the implicit function theorem.
Moreover, it is well-known to exist on all of $H$ for quasi-thermostatic steady states, cf. Section \ref{sec:diff_geom}.
This implies that $\beta(\eta)$ must be a point in the intersection $P(\eta) \cap \Vss$ and the requirement for $\beta$ to be a local inverse amounts to requiring that $\beta(\eta)$ is an isolated point in $P(\eta) \cap \Vss$.
This is illustrated in Fig. \ref{fig:sensitivity}.
The sensitivity matrix $\chi$ defined in \eqnref{eq:chi} is the Jacobian of this map
\begin{equation*}
    D_{\eta}\beta = \chi.
\end{equation*}
\begin{figure}[tb]
    \centering
    \includegraphics[scale=0.25]{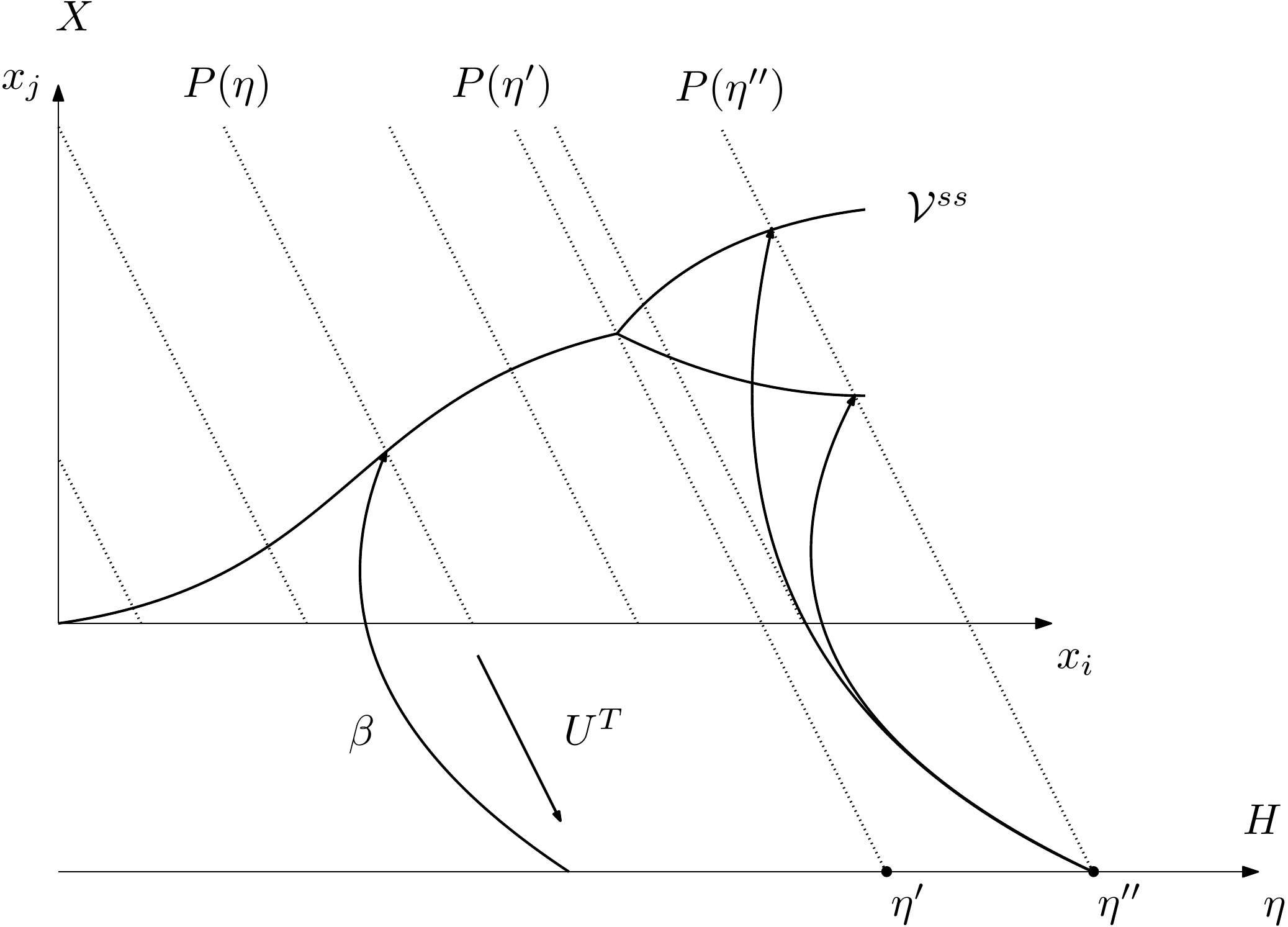}
    \caption{Setup for the definition of absolute sensitivities.
    The map $U^T: X \rightarrow H$ gives a fibration of $X$ by stoichiometric polytopes (indicated by dotted lines) with base $H$.
    The differentiable parametrization of $\Vss$ by the vectors of conserved quantities requires that the map $U^T: \Vss \rightarrow H$ is locally invertible with a differentiable inverse.
    The latter condition is not satisfied at, e.g., bifurcation points ($\eta'$ in the figure) and such points have to be excluded by restricting the parameter space $\tilde{H} := H \setminus \{ \eta' \}$ accordingly.
    If $\Vss$ has multiple intersection points with a stoichiometric polytope $P(\eta'')$, then there are different possible sections $\beta$ which amount to choosing one of the respective branches of $\Vss$.
    }
    \label{fig:sensitivity}
\end{figure}
With this setup, the absolute sensitivity of a chemical can be described as follows:
If, at a steady state $x \in \Vss$, the concentration of a single chemical $X_i$ is perturbed by an amount of $\delta x_i$, a new steady state is adopted by the CRN.
Thereby, the perturbation $\delta x_i$ distributes to concentration changes of all chemicals, prescribed by the coupling through the conserved quantities.
The absolute sensitivity $\alpha_i$ quantifies the fraction of concentration change that remains with $X_i$ after this redistribution, i.e. the concentration change of $X_i$ is given by $\alpha_i \delta x_i$, to first order in $\delta x_i$.
Analogously, the absolute cross-sensitivity $\aij$ quantifies the resulting changes of the concentration of the chemical $X_j$ as $\aij \delta x_i$, to first order in $\delta x_i$.

Let the steady state be given by $x = \beta(\eta)$.
The change $\delta x_i$ of the concentration of a chemical $X_i$ corresponds to the change in total concentration $\Delta x = (0, \dotsc, 0 ,\delta x_i, 0, \dotsc ,0)^T$.
This gives the change of the vector of conserved quantities $\Delta \eta = D_{x} U^T \Delta x = U^T \Delta x$.
One can express the adjusted steady state as $\beta(\eta + \Delta \eta)$ and thus obtain the linearization
\begin{align} \label{eq:Taylor}
    \beta(\eta + \Delta \eta) = \beta(\eta) + D_{\eta} \beta (\Delta \eta) + \mathcal{O}(\|\Delta \eta \|^2) = x + D_{\eta} \beta (U^T \Delta x) + \mathcal{O}(\|\Delta x \|^2).
\end{align}
The linear change in the concentration of $X_j$ is
\begin{align*}
    [D_{\eta} \beta (U^T \Delta x)]_j = \left[\frac{\partial x}{\partial \eta} U^T \Delta x \right]_j = \sum_{k=1}^q \frac{\partial x_j}{\partial \eta_k} u_{ik} \delta x_i,
\end{align*}
which leads to the following definition.
\begin{definition} \label{def:alpha_i}
The {\it absolute sensitivity $\aij$} of $X_j$ with respect to $X_i$ at a point $x \in \Vss$ is defined as
\begin{align*} \label{eq:aij}
    \aij:= \sum_{k=1}^q \frac{\partial x_j}{\partial \eta_k} u_{ik}
\end{align*}
and the {\it absolute sensitivity} of the chemical $X_i$ is $\alpha_i := \aii$.
The $n \times n$ matrix $\A$ of absolute sensitivities is given by
\begin{align*}
    \A_{ij} = \aji
\end{align*}
and the vector $\alpha$ of absolute sensitivities is given by the diagonal elements of $\A$, i.e., $\alpha = (\alpha_1,\dotsc,\alpha_n)^T \in \R^n$.
\end{definition}

\begin{remark} \label{rmk:A_chi_UT}
The matrix of absolute sensitivities is given by $\A = \chi U^T$.
\end{remark}
\begin{remark} \label{rmk:geometry}
More geometrically, the expansion in (\ref{eq:Taylor}) can be reformulated as follows:
It is natural to consider $\Delta x$ and $\Delta \eta$ as elements of the tangent spaces $T_xX$ and $T_{\eta}H$.
The tangent vector $\Delta \eta$ pulls back to the tangent vector $D_{\eta} \beta (\Delta \eta) \in T_x \Vss$.
The vector $D_{\eta} \beta (\Delta \eta)$ is the unique vector which induces the same change in the vector of conserved quantities as $\Delta x$ and which is, at the same time, tangent to $\Vss$ (the uniqueness of this tangent vector follows from the fact that $\beta$ is a section to $U^T$, i.e., $U^T \circ \beta = \textrm{id}_H$, therefore $U^T D_{\eta} \beta = I_q$ and thus the Jacobian $D_{\eta} \beta$ is injective).
Its $j$th component is given by $[D_{\eta} \beta (\Delta \eta)]_j = \aij \delta x_i$.
For $\delta x_i = 1$, one obtains $\Delta x = e_i$ and recovers the expression $\aij = [D_{\eta} \beta (U^T e_i)]_j = \left(\chi U^T\right)_{ji}$
\end{remark}

The absolute sensitivities have the following properties:

\begin{theorem} \label{thm:abs_sens1}
The matrix of absolute sensitivities $\A$ is independent of the choice of a basis of $\Ker[S^T]$.
Moreover, the equality
\begin{align*}
   \mathrm{Tr}[\A] = \sum_{i=1}^n \alpha_i = q
\end{align*}
holds, whereby $q = \dim \Ker[S^T]$.
\end{theorem}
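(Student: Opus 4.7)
The plan is to treat both claims via Remark \ref{rmk:A_chi_UT}, which identifies $\A = \chi U^T$, and to exploit the defining relation $U^T \circ \beta = \mathrm{id}_{\tilde H}$ of the section $\beta$.

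For the basis-independence, I would fix two bases of $\Ker[S^T]$ with corresponding matrices $U$ and $U'$. Since both sets of columns span the same $q$-dimensional subspace, there is an invertible $q \times q$ change-of-basis matrix $G$ with $U' = U G$. The induced parametrizations of the conserved quantities are related by $\eta' = (U')^T x = G^T U^T x = G^T \eta$, so by the chain rule the two sensitivity matrices satisfy
\begin{equation*}
\chi' \;=\; \frac{\partial x}{\partial \eta'} \;=\; \frac{\partial x}{\partial \eta}\,\frac{\partial \eta}{\partial \eta'} \;=\; \chi\, (G^T)^{-1}.
\end{equation*}
Plugging into $\A' = \chi' (U')^T = \chi (G^T)^{-1} G^T U^T = \chi U^T = \A$ gives the claim. (Equivalently, one can invoke Remark \ref{rmk:geometry}: the tangent vector $D_\eta \beta (U^T \Delta x) \in T_x \Vss$ is characterized intrinsically as the unique tangent-to-$\Vss$ lift of $\Delta x$ that induces the same infinitesimal change in any system of conserved quantities, which makes no reference to a basis.)

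For the trace identity, I would apply the cyclic property of the trace:
\begin{equation*}
\mathrm{Tr}[\A] \;=\; \mathrm{Tr}[\chi U^T] \;=\; \mathrm{Tr}[U^T \chi].
\end{equation*}
Differentiating the section identity $U^T \beta(\eta) = \eta$ at $\eta$ yields
\begin{equation*}
U^T \chi \;=\; U^T D_\eta \beta \;=\; D_\eta (U^T \circ \beta) \;=\; D_\eta \mathrm{id}_{\tilde H} \;=\; I_q,
\end{equation*}
so $\mathrm{Tr}[U^T \chi] = \mathrm{Tr}[I_q] = q$, completing the proof.

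I do not foresee a serious obstacle: both parts are essentially one-line consequences of the chain rule combined with the identity $U^T \beta = \mathrm{id}$. The only subtlety worth flagging is that $\chi$ is an $n \times q$ matrix and $U^T \chi$ is $q \times q$ (equal to the identity on $\R^q$), while $\chi U^T = \A$ is $n \times n$ and has rank $q$; the cyclic trace step is what reconciles these dimensions and explains why $q$, the dimension of the kernel, is exactly the sum of the absolute sensitivities.
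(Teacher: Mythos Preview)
Your proof is correct and matches the paper's approach: the paper proves basis-independence by the same change-of-basis chain-rule computation, and establishes the trace identity by differentiating $\eta_j = \sum_i u_{ij} x_i$ componentwise and summing (which is exactly $U^T\chi = I_q$ written out in coordinates). Your use of the cyclic trace property is a slightly slicker packaging of the same step, but the argument is essentially identical.
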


\begin{proof}
The independence of on the choice of a basis of $\Ker[S^T]$ can be verified by a direct calculation:
Let $U'$ denote another matrix of basis vectors, i.e., $U' = UB$ for some  $B \in \textrm{GL}(q)$.
The respective vector of conserved quantities $\eta' = (U')^T x$ satisfies $\eta' = (U')^T x = B^T \eta$, where $\eta = U^T x$.
By Remark \ref{rmk:A_chi_UT}, the matrix of absolute sensitivities is given by
\begin{equation*}
    \A = \frac{\partial x}{\partial \eta}U^T = \frac{\partial x}{\partial \eta'} \frac{\partial \eta'}{\partial \eta} U^T = \frac{\partial x}{\partial \eta'} B^T U^T =  \frac{\partial x}{\partial \eta'} \left(U'\right)^T,
\end{equation*}
which proves the basis independence.

The second claim is verified by differentiating \eqnref{eq:eta_def}, i.e., $\eta_j = \sum_{i=1}^n u_{ij}x_i$, with respect to $\eta_j$ and summung over all $j$:
\begin{align} \label{eq:q}
    q = \sum_{i=1}^n \sum_{j=1}^q \frac{\partial x_i}{\partial \eta_j} u_{ij} = \sum_{i=1}^n \alpha_i.
\end{align}
\end{proof}
The equality $\mathrm{Tr}[\A] = q$ shows that for any CRN, there is a balance between low-sensitivity and high-sensitivity chemicals.

\begin{remark} \label{rmk:geometry_2}
The basis independence of $\A$ can also be seen from the geometry discussed in Remark \ref{rmk:geometry} without any calculations:
The tangent vector $D_{\eta} \beta (U^T e_i)$ is the unique vector tangent to $\Vss$ which satisfies $D_{\eta} \beta (U^T e_i) - e_i \in \Ker[U^T]$.
Now the decomposition $\R^n \cong \Ker[S^T] \oplus \Img[S] \cong \Img[U] \oplus \Ker[U^T]$ together with $\Img[U] \cong \Ker[S^T]$ implies that $\Ker[U^T] \cong \Img[S]$.
Thus the characterization of the tangent vector $D_{\eta} \beta (U^T e_i)$ can be written as $D_{\eta} \beta (U^T e_i) - e_i \in \Img[S]$ which shows that the geometrical construction is independent of the particular choice of $U$.
\end{remark}

This concludes the introduction of absolute sensitivity in the most general setup.
In the following section, a more explicit characterizations of the absolute sensitivity matrix is given when the steady state manifold is endowed with more structure.

\section{Generalized Cramer-Rao bound and absolute sensitivity for quasi-thermostatic CRN} \label{sec:CRB}

In this section, the absolute sensitivities are analyzed for quasi-thermostatic CRN \cite{horn1972necessary,feinberg1972complex}.
This class of CRN derives its importance from the fact that it includes all equilibrium and complex balanced CRN under mass action kinetics.
The class of quasi-thermostatic CRN is, however, even wider, cf. \cite{craciun2022disguised}.

\subsection{Quasi-thermostatic steady states}

Depending on the CRN and on the kinetic model, the shape of the steady state manifold $\Vss$ can be very complex and, in general, its global structure cannot be determined.
However, there is a large class of CRN whose steady state manifolds have the simple form
\begin{align} \label{eq:quasiTS_Vss}
	\Vss = \{ x \in X | \log x - \log x^{ss} \in \Ker[S^T] \},
\end{align}
where $x^{ss} \in X$ is a particular solution of $Sj(x) = 0$ (note that $\Vss$ is independent of the choice of the base point $x^{ss} \in \Vss$).
Such CRN are called {\it quasi-thermostatic} \cite{horn1972necessary,feinberg1972complex}.

Using the basis $\UU$ of $\Ker[S^T]$ defined in Section \ref{sec:sensitivity}, the steady state manifold of a quasi-thermostatic CRN can be parametrized by $\R^q$ as
\begin{align} \label{eq:quasiTS_Vss1}
    \gamma: \R^q &\rightarrow X \\
    \nonumber
    \lambda &\mapsto x^{ss} \circ \exp(\VV \lambda).
\end{align}
This parametrization is reminiscent of an exponential family of probability distributions\footnote{An exponential family of probability distributions has the form (\ref{eq:quasiTS_Vss1}), except for a normalization constant which ensures that the probabilities sum to 1.}, which is often encountered in information geometry \cite{amari2016} and the Cramer-Rao bound derived in Section \ref{sec:CRB_derivation} is based on this link between CRN theory and statistics.
The variable $\lambda$ is closely related to the chemical potential \cite{kobayashi2022kinetic}.

For quasi-thermostatic CRN, the intersection between the steady state manifold $\Vss$ and any given stoichiometric polytope $P(\eta)$ is unique, i.e., $\eta$ is a global coordinate for $\Vss$.
This is the content of Birch's theorem \cite{craciun2009toric} which has been proven in the context of CRN by Horn and Jackson in \cite{horn1972} and is well-known in algebraic statistics \cite{pachter2005}.
In other words, there is a parametrization of $\Vss$ by the space $\HH$ of conserved quantities given by
\begin{align} \label{eq:quasiTS_Vss2}
    \beta: \HH &\rightarrow X \\
    \nonumber
    \eta &\mapsto \Vss \cap P(\eta).
\end{align}
This parametrization plays an important role in applications because the $\eta$ variables are easy to control in experimental setups.
The Jacobian of $\beta$ is the sensitivity matrix $\chi$.
Although there is no analytical expression for $\beta(\eta)$ for general quasi-thermostatic CRN, the Jacobian can be explicitly computed, as is shown in the next section.

\subsection{Differential geometry of quasi-thermostatic CRN and sensitivity} \label{sec:diff_geom}
The two parametrizations of $\Vss$ introduced in the previous section are illustrated in Fig. \ref{fig:parametrizations}.
The steady state manifold can be either be explicitly parametrized by $\lambda \in \R^q$ or implicitly by giving the stoichiometric polytope which contains the point $x \in \Vss$.

\begin{figure}
    \centering
    \includegraphics[scale=0.25]{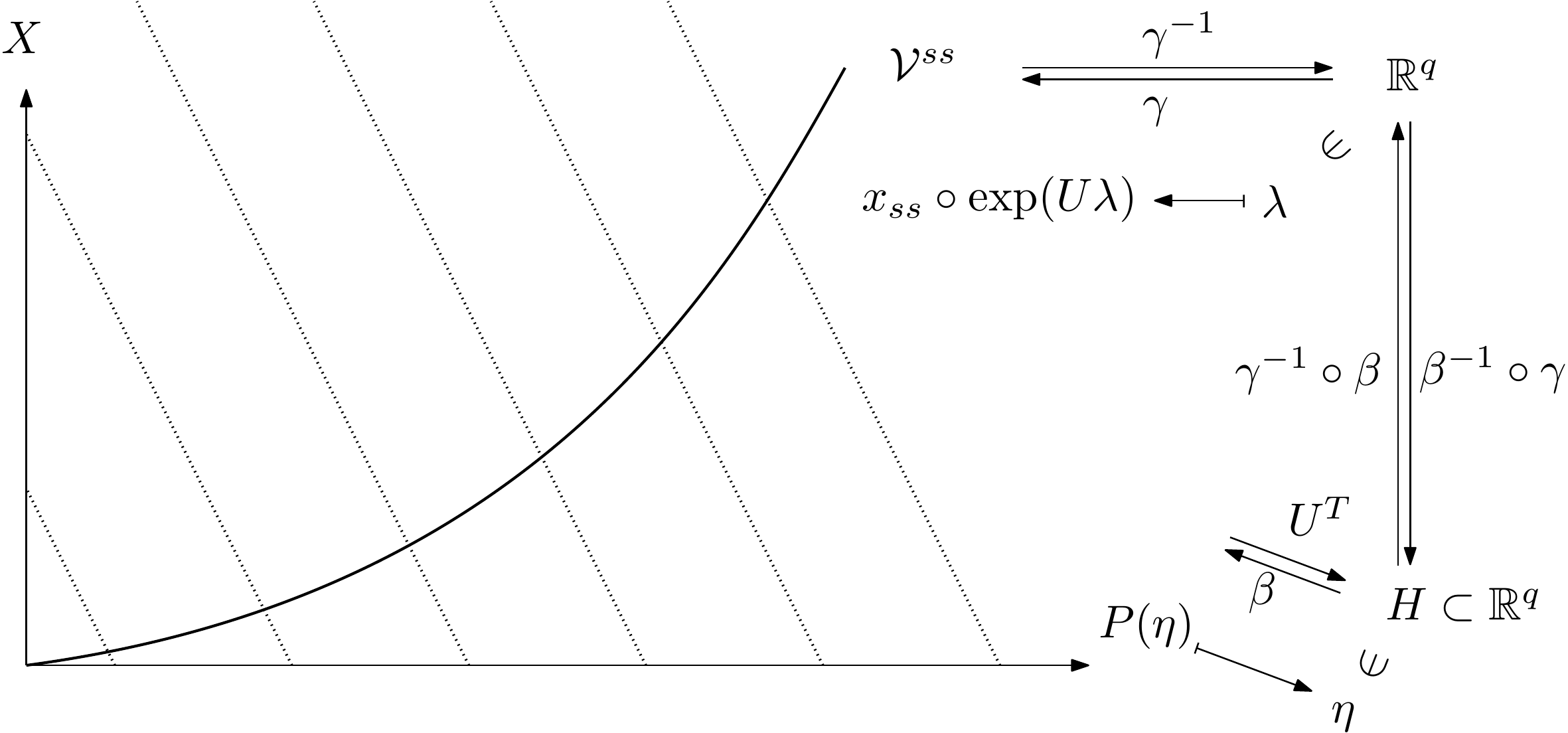}
    \caption{The two parametrization of $\Vss$ for quasi-thermostatic CRN.
    The map $\gamma$ provides a direct parametrization of $\Vss$ which is known as the exponential family in statistics and as a toric variety in algebraic geometry.
    The map $\beta$ characterizes points on $\Vss$ through their vector of conserved quantities $\eta$, i.e., as the unique intersection point between $\Vss$ and the stoichiometric polytope $P(\eta)$.
    }
    \label{fig:parametrizations}
\end{figure}

Fix a point $x \in \Vss$ with respective parameters $\lambda \in \R^q$ and $\eta \in H$.
The sensitivity matrix $\chi = \frac{\partial x}{\partial \eta}$ at $x$ is the Jacobian $D_{\eta}\beta$ of the map $\beta$ at $\eta = \UU^T x$.
This Jacobian can be evaluated by using the commutativity of the diagram in Fig. \ref{fig:parametrizations} as follows
\begin{align}
    \nonumber
    D_{\eta} \beta &= D_{\lambda} \gamma \cdot D_{\eta}(\gamma^{-1} \circ \beta) = D_{\lambda}\gamma \cdot [D_{\lambda}(\beta^{-1} \circ \gamma)]^{-1} \\
    \label{eq:Jacobian}
    &= D_{\lambda}\gamma \cdot [D_{x} \beta^{-1} \cdot D_{\lambda} \gamma]^{-1}.
\end{align}
The map $\beta^{-1}$ is the linear map given by the matrix $U^T$ and the Jacobian $D_{\lambda}\gamma$ can be evaluated explicitly from \eqnref{eq:quasiTS_Vss1} as $D_{\lambda}\gamma = \X \VV$, where $\X$ is the  $n \times n$ diagonal matrix with $\X_{ii} = x_i$.
This yields the explicit form for the sensitivity matrix $\chi(x) = \X \VV \cdot [\UU^T \X \VV]^{-1}$ and for the matrix of absolute sensitivities
\begin{align} \label{eq:Jacobian_final}
    \A = \chi U^T = \X \VV \cdot [\UU^T \X \VV]^{-1} U^T
\end{align}
In the next section, it is shown how the matrix of absolute sensitivities appears as a part of a Cramer-Rao bound for CRN and how the bound leads to an explicit linear algebraic characterization of the matrix elements.
The explicit expression for the matrix $A$ given in (\ref{eq:Jacobian_final}) yields $A^2 = A$.
This means that $A$ represents a projection operator and this aspect will also be further clarified based on the Cramer-Rao bound.

\subsection{The Cramer-Rao bound for quasi-thermostatic CRN} \label{sec:CRB_derivation}

The analogy between a concentration vector $x \in \R^n_{>0}$ as a distribution on the $n$-point set and a probability distribution is the core reason for the applicability of statistical and information geometric methods to CRN theory.
Thereby, the steady state manifold of a quasi-thermostatic CRN is analogous to the exponential family of probability distributions and thus the formulation of a Cramer-Rao bound for quasi-thermostatic CRN is natural.
In the bound derived here, the Jacobian of the coordinate change from $\eta$ to $\lambda$ coordinates serves as a Fisher information metric and the matrix $\UU$ is treated as the bias of an estimator.
With these ingredients, the Cramer-Rao bound for CRN is derived in Theorem \ref{thm:CRB}.
The lower bound is closely related to the matrix of absolute sensitivities and a tightening of the bound leads to a linear algebraic characterization of the absolute sensitivities in Lemma \ref{lem:abs_sens2} and Theorem \ref{thm:matrix_abs_sens}.

From now on, fix a point $x \in \Vss$ with coordinates $\lambda \in \R^q$ and $\eta \in H$, respectively.
Denote the Jacobian of the coordinate change from $\eta$ to $\lambda$ by
\begin{align} \label{eq:g_eta}
    g_{\eta}:= D_{\eta}(\gamma^{-1} \circ \beta) = [\UU^T \diag(x) \VV]^{-1}.
\end{align}
Moreover, define the $n \times n$ diagonal matrix $\XX$ by $\XX_{ii} = \frac{1}{x_i}$ and the $\XX$-weighted inner product on $\R^n$ by
\begin{align*}
    \langle v,w \rangle_{\frac{1}{x}} := \sum_{i=1}^n  \frac{1}{x_i} v_i w_i.
\end{align*}
Let $\G$ be an arbitrary $n \times n$ matrix and $\overline{\G}$ a $n \times n$ matrix whose column span satisfies
\begin{equation*}
    \textrm{Span}\left[\overline{\G}\right] \subset \textrm{Im}[S].
\end{equation*}
This is equivalent to saying that the columns of $\overline{\G}$ are orthogonal to $\X \Ker[S^T]$ with respect to the $\langle.,.\rangle_{\frac{1}{x}}$ inner product, i.e., $\overline{\G}^T\XX \X U = \overline{\G}^T U = 0$.
The covariance matrix of $\G$ is defined as
\begin{align*}
    \mathrm{Cov}(\G) := (\G - \overline{\G})^T \XX (\G - \overline{\G}).
\end{align*}
It is called a covariance matrix because its elements are of the form
\begin{align*}
    \mathrm{Cov}(\G)_{ij} := \langle \G_i - \overline{\G}_i,  \G_j - \overline{\G}_j \rangle_{\frac{1}{x}}.
\end{align*}
The following theorem is formally analogous to a Cramer-Rao bound for the covarience matrix $\mathrm{Cov}(\G)$.
\begin{theorem} \label{thm:CRB}
For a quasi-thermostatic CRN, let the covariance matrix $\mathrm{Cov}(\G)$ be defined as above.
It is bounded from below by
\begin{equation} \label{eq:CRB}
    \mathrm{Cov}(\G) \geq \G^T U g_{\eta} U^T \G,
\end{equation}
\noindent where the matrix inequality is understood in the sense that the difference matrix between the left hand side and the right hand side of the inequality is positive semidefinite.
\end{theorem}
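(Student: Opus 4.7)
My plan is to derive the bound from the positive semidefiniteness of a single $2n \times 2n$ Gram matrix built in the inner product $\langle\cdot,\cdot\rangle_{\frac{1}{x}}$, mirroring the classical matrix-form proof of the Cramer--Rao inequality. The guiding observation is that $g_\eta^{-1} = U^T \X U$ is itself the Gram matrix of the columns of $\X U$ in this inner product, and that the orthogonality hypothesis $\overline{V}^T U = 0$ makes the cross-term between $V - \overline{V}$ and $\X U$ collapse cleanly to $V^T U$. Once both observations are in place, a Schur-complement argument finishes the proof with essentially no computation.

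Concretely, I would first assemble the block matrix
\begin{equation*}
    M := \begin{pmatrix} (V-\overline{V})^T \XX (V-\overline{V}) & (V-\overline{V})^T \XX \X U \\ (\X U)^T \XX (V-\overline{V}) & (\X U)^T \XX (\X U) \end{pmatrix},
\end{equation*}
which is positive semidefinite as the Gram matrix of the concatenated columns of $V-\overline{V}$ and $\X U$ with respect to $\langle\cdot,\cdot\rangle_{\frac{1}{x}}$. Using $\XX \X = I_n$ together with $\overline{V}^T U = 0$, the upper-right block reduces to $(V-\overline{V})^T U = V^T U$ and the lower-right block reduces to $U^T \X U = g_\eta^{-1}$, so that
\begin{equation*}
    M = \begin{pmatrix} \mathrm{Cov}(V) & V^T U \\ U^T V & g_\eta^{-1} \end{pmatrix} \geq 0.
\end{equation*}

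Since $U$ has full column rank $q$ (being a basis of $\Ker[S^T]$) and $\X$ is strictly positive definite on $X = \R^n_{>0}$, the lower-right block $g_\eta^{-1}$ is strictly positive definite and therefore invertible. The Schur-complement characterization of the positive semidefiniteness of a symmetric $2\times 2$ block matrix with invertible lower-right corner then yields $\mathrm{Cov}(V) - V^T U\, g_\eta\, U^T V \geq 0$, which is exactly \eqref{eq:CRB}. There is no genuine obstacle here: the one substantive insight is recognizing the block structure that lets both hypotheses enter simultaneously, after which the argument is routine bookkeeping with $\XX \X = I_n$ and the standard Schur-complement criterion.
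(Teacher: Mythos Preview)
Your proof is correct. One small slip: the Gram matrix $M$ has size $(n+q)\times(n+q)$, not $2n\times 2n$, since $\X U$ contributes $q$ columns; this does not affect the argument.

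The paper takes a closely related but technically distinct route: it applies the Cauchy--Schwarz inequality to the vectors $\XX^{1/2}(V-\overline{V})a$ and $\X^{1/2}Ub$ for arbitrary $a\in\R^n$, $b\in\R^q$, obtaining
\[
\bigl[a^T\mathrm{Cov}(V)a\bigr]\,\bigl[b^Tg_\eta^{-1}b\bigr]\ \geq\ \bigl[a^TV^TUb\bigr]^2,
\]
and then specializes to $b=g_\eta U^TVa$ to cancel one factor and conclude. Your Schur-complement argument is the matrix-level packaging of exactly the same content: the positive semidefiniteness of your Gram matrix $M$ is equivalent to the above Cauchy--Schwarz inequality holding for all $a,b$, and taking the Schur complement with respect to the invertible lower-right block $g_\eta^{-1}$ is equivalent to optimizing over $b$. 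Your version has the advantage of not having to guess the optimal $b$ and of handling the edge case $a^TV^TUg_\eta U^TVa=0$ automatically; the paper's version is slightly more elementary in that it invokes only Cauchy--Schwarz rather than the Schur-complement criterion. Both rest on the same two identities, $\overline{V}^TU=0$ and $U^T\X U=g_\eta^{-1}$.
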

\noindent

The proof is given in the Supplementary Material, Section \ref{SIapp:proof}.
It is an adaptation of the proof from \cite{kay1993} for the Cramer-Rao bound in exponential families.
This multivariate bound is, in a certain sense, orthogonal to the scalar bound derived for CRN dynamics in \cite{yoshimura2021}.

Using the expression (\ref{eq:Jacobian_final}) for the matrix of absolute sensitivities and substituting the expression (\ref{eq:g_eta}) for $g_{\eta}$, the Cramer-Rao bound can be rewritten as
\begin{equation*}
    \mathrm{Cov}(\G) \geq \G^T \XX \A \G
\end{equation*}
and if one chooses $\G$ to be the identity matrix $I_n$, the inequality
\begin{equation} \label{eq:CRBchi}
    \mathrm{Cov}(I_n) \geq \XX \A
\end{equation}
is obtained.

\subsection{Tightening the bound and a linear algebraic characterization of absolute sensitivities} \label{sec:tightening}

For $\overline{\G} = 0$, the diagonal elements of the inequality (\ref{eq:CRBchi}) yield $1 \geq \alpha_i$ which is not tight as can be seen by summing over all $i$ and comparing with \eqnref{eq:q}, giving $n \geq q$.
For a general $\G$ matrix, the optimal $\overline{\G}$ can be determined as follows.
The diagonal entries of $\textrm{Cov}(\G)$ are given by the squared norm $\| \G_i - \overline{\G}_i \|_{\frac{1}{x}}^2 = \langle \G_i - \overline{\G}_i,  \G_i - \overline{\G}_i \rangle_{\frac{1}{x}}$, which is minimized if and only if $\overline{\G}_i$ is the $\langle.,.\rangle_{\frac{1}{x}}$-orthogonal projection of $\G_i$ to $\textrm{Im}[S]$.
In this case $\G_i - \overline{\G}_i$ becomes the $\langle.,.\rangle_{\frac{1}{x}}$-orthogonal projection of $\G_i$ to $\X \Ker[S^T]$.
Denote this projection as
\begin{equation} \label{eq:pi}
    \pi: \R^n \rightarrow \X \Ker[S^T].
\end{equation}
The application of this to $\G = I_n$ yields the following linear algebraic characterization of the absolute sensitivities:
\begin{lemma} \label{lem:abs_sens2}
For quasi-thermostatic CRN, the absolute sensitivity $\alpha_i$ at a point $x =(x_1,\dotsc,x_n)$ is given by
\begin{align*}
    \alpha_i = x_i \lVert \pi(e_i) \rVert_{\frac{1}{x}}^2,
\end{align*}
where $e_i$ is the $i$th canonical unit vector.
\end{lemma}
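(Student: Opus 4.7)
The plan is to read off the lemma as the tight case of the diagonal of the matrix Cramer--Rao bound (\ref{eq:CRBchi}). With $\G = I_n$, choose the $i$th column of the free matrix $\overline{\G}$ to be the $\langle.,.\rangle_{\frac{1}{x}}$-orthogonal projection of $e_i$ onto $\Img[S]$. Then $e_i - \overline{\G}_i = \pi(e_i)$ by the definition of $\pi$ in (\ref{eq:pi}), so the $i$th diagonal entry on the left of (\ref{eq:CRBchi}) is $\lVert \pi(e_i) \rVert_{\frac{1}{x}}^2$ while the right-hand side gives $(\XX \A)_{ii} = \alpha_i/x_i$. The Cramer--Rao bound only yields $\lVert \pi(e_i) \rVert_{\frac{1}{x}}^2 \geq \alpha_i/x_i$, so the real content of the lemma is that equality holds at this optimum.

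To prove equality, I would derive $\pi(e_i)$ explicitly and compare with $\alpha_i$ obtained from (\ref{eq:Jacobian_final}). The decomposition $\R^n = \X \Ker[S^T] \oplus \Img[S]$ is $\langle.,.\rangle_{\frac{1}{x}}$-orthogonal because $\langle \X U z, S w \rangle_{\frac{1}{x}} = z^T U^T S w = 0$ (using $U^T S = 0$) and the two summands have complementary dimensions. Writing $\pi(e_i) = \X U z$ for some $z \in \R^q$ and imposing that $e_i - \X U z$ be $\langle.,.\rangle_{\frac{1}{x}}$-orthogonal to every $\X U z'$ yields the normal equation $U^T \X U z = U^T e_i$, so $z = g_{\eta} U^T e_i$ and $\pi(e_i) = \X U g_{\eta} U^T e_i$. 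Note that (\ref{eq:Jacobian_final}) then gives $\pi(e_i) = \A e_i$, which is the geometric heart of the lemma.

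The equality now reduces to the identity $g_{\eta} (U^T \X U) g_{\eta} = g_{\eta}$, which is just the definition of $g_{\eta}$. A short calculation gives $\lVert \pi(e_i) \rVert_{\frac{1}{x}}^2 = e_i^T U g_{\eta} U^T \X U g_{\eta} U^T e_i = e_i^T U g_{\eta} U^T e_i$, whereas $\alpha_i = \A_{ii} = e_i^T \X U g_{\eta} U^T e_i = x_i \, e_i^T U g_{\eta} U^T e_i$, and the lemma follows. I do not anticipate any substantive obstacle: the whole argument is linear algebra on top of the explicit formula for $\A$, and it simultaneously makes precise the fact that $\A$ is the $\langle.,.\rangle_{\frac{1}{x}}$-orthogonal projection onto the tangent space $T_x \Vss = \X \Ker[S^T]$, as foreshadowed in Remark \ref{rmk:geometry_2}.
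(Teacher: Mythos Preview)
Your proof is correct but follows a different route from the paper. The paper establishes the inequality $x_i \lVert \pi(e_i) \rVert_{\frac{1}{x}}^2 \geq \alpha_i$ from the Cramer--Rao bound, then sums both sides over $i$: the left side becomes $\sum_j \lVert v_j \rVert_{\frac{1}{x}}^2 = q$ after expanding $\pi(e_i)$ in an $\langle.,.\rangle_{\frac{1}{x}}$-orthonormal basis $\{v_j\}$ of $\X\Ker[S^T]$, while the right side equals $q$ by the trace identity $\sum_i \alpha_i = q$ of Theorem~\ref{thm:abs_sens1}. Since a sum of nonnegative terms vanishes only if each term does, every inequality is an equality.

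You instead compute $\pi(e_i)$ explicitly via the normal equations, obtaining $\pi(e_i) = \X U g_\eta U^T e_i = \A e_i$, and then verify the identity $\lVert \pi(e_i)\rVert_{\frac{1}{x}}^2 = e_i^T U g_\eta U^T e_i = \alpha_i/x_i$ by a direct matrix calculation. This bypasses the Cramer--Rao bound entirely: despite your framing, the inequality plays no role in your argument. Your approach is the linear-algebraic counterpart of the geometric derivation the paper gives later in Remark~\ref{rmk:geom2}, and it has the advantage of simultaneously proving that $\A$ \emph{is} the $\langle.,.\rangle_{\frac{1}{x}}$-orthogonal projection onto $\X\Ker[S^T]$, which the paper only reaches in Theorem~\ref{thm:matrix_abs_sens} via a Cholesky-decomposition argument. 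The paper's trace argument, on the other hand, is appealing because it shows how the Cramer--Rao machinery and the structural identity $\mathrm{Tr}[\A]=q$ combine to force tightness without ever writing down $\pi$ explicitly.
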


\begin{proof}
   The Cramer-Rao bound (\ref{eq:CRBchi}) yields
   \begin{align} \label{eq:CRNinProof}
       x_i \lVert \pi(e_i) \rVert_{\frac{1}{x}} ^2 \geq \alpha_i.
   \end{align}
Expand $\pi(e_i)$ in an $\langle.,.\rangle_{\frac{1}{x}}$-orthonormal basis $\{v_1,...,v_q\}$ of $\X \Ker[S^T]$ as
\begin{align*}
    \pi(e_i) = \sum_{j=1}^q \langle e_i, v_j \rangle_{\frac{1}{x}} v_j = \frac{1}{x_i} \sum_{j=1}^q v_{ji} v_j.
\end{align*}
Then $\lVert \pi(e_i) \rVert_{\frac{1}{x}}^2 = \sum_{j=1}^q \frac{1}{x_i^2} v_{ji}^2$ and the inequality (\ref{eq:CRNinProof}) gives
\begin{equation*}
    \sum_{j=1}^q \frac{1}{x_i} v_{ji}^2 \geq \sum_{j=1}^q \frac{\partial x_i}{\partial \eta_j} u_{ij}
\end{equation*}
The summation over all $i$ yields
\begin{align*}
    \sum_{i=1}^n \sum_{j=1}^q \frac{1}{x_i} v_{ji}^2 \geq  \sum_{i=1}^n \sum_{j=1}^q \frac{\partial x_i}{\partial \eta_j} u_{ij}x_i.
\end{align*}
The left hand side is the sum of the squared norms $\sum_{j=1}^q \lVert  v_j \rVert_{\frac{1}{x}}^2 = q$ and the right hand side is equal to $q$ by equation (\ref{eq:q}).
Thus (\ref{eq:CRNinProof}) must be an equality, which proves the claim.
\end{proof}
This characterization of the $\alpha_i$ is independent of the choice of a basis for $\Ker[S^T]$.
It also yields the tightness of the Cramer-Rao bound for $\mathrm{Cov}(I_n)$ and thereby the following linear-algebraic characterization of the matrix of absolute sensitivities:
\begin{theorem} \label{thm:matrix_abs_sens}
For quasi-thermostatic CRN, the matrix of absolute sensitivities $\A$ at a point $x \in \Vss$ is given by
\begin{equation*} \label{eq:corr_abs_sens}
    \A = \X \mathrm{Cov}(I_n)
\end{equation*}
with $\mathrm{Cov}(I_n)_{ij} = \langle \pi(e_i),\pi(e_j)\rangle_{\frac{1}{x}}$.
Thus, the absolute sensitivities are given by
\begin{equation*}
    \aij = x_j \langle \pi(e_j),\pi(e_i)\rangle_{\frac{1}{x}}.
\end{equation*}
\end{theorem}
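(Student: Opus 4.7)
The plan is to leverage Lemma \ref{lem:abs_sens2} together with the positive semidefiniteness afforded by the Cramer-Rao bound to promote the diagonal equality of that lemma to a full matrix equality.

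First, I would tighten the Cramer-Rao bound (\ref{eq:CRBchi}) by choosing, as in Section \ref{sec:tightening}, the optimal $\overline{\G}$ column-by-column: $\overline{\G}_i$ is taken to be the $\langle \cdot, \cdot \rangle_{\frac{1}{x}}$-orthogonal projection of $e_i$ onto $\Img[S]$, so that $e_i - \overline{\G}_i = \pi(e_i)$. With this choice, the very definition of the covariance yields $\mathrm{Cov}(I_n)_{ij} = \langle \pi(e_i), \pi(e_j)\rangle_{\frac{1}{x}}$, establishing the second stated formula directly. In parallel, the explicit expression (\ref{eq:Jacobian_final}) gives $\XX \A = U [U^T \X U]^{-1} U^T$, which is manifestly symmetric; combined with the symmetry of $\mathrm{Cov}(I_n)$, the difference $M := \mathrm{Cov}(I_n) - \XX \A$ is a symmetric matrix, and by Theorem \ref{thm:CRB} it is positive semidefinite.

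The decisive input is then Lemma \ref{lem:abs_sens2}, which identifies the diagonal of $M$: one has $M_{ii} = \lVert \pi(e_i) \rVert_{\frac{1}{x}}^2 - \frac{\alpha_i}{x_i} = 0$ for every $i$. It remains only to invoke the elementary fact that a symmetric positive semidefinite matrix with vanishing diagonal must vanish identically: testing against $e_i + t e_j$ gives $(e_i + t e_j)^T M (e_i + t e_j) = 2 t M_{ij} \geq 0$ for all $t \in \R$, forcing $M_{ij} = 0$. Therefore $\mathrm{Cov}(I_n) = \XX \A$; multiplying by $\X$ produces $\A = \X \mathrm{Cov}(I_n)$, and the componentwise expression $\aij = \A_{ji} = x_j \langle \pi(e_j), \pi(e_i)\rangle_{\frac{1}{x}}$ follows from the definition $\A_{ij} = \aji$ together with the symmetry of the inner product. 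No genuine obstacle is expected: the heavy lifting is carried by Lemma \ref{lem:abs_sens2} and the explicit Jacobian formula (\ref{eq:Jacobian_final}), so the theorem is essentially the observation that once the Cramer-Rao bound is saturated on the diagonal, semidefiniteness propagates saturation to all off-diagonal entries.
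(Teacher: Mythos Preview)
Your proposal is correct and follows essentially the same route as the paper: choose the optimal $\overline{\G}$ so that the covariance is built from the projections $\pi(e_i)$, invoke Lemma~\ref{lem:abs_sens2} to kill the diagonal of $M = \mathrm{Cov}(I_n) - \XX\A$, and then use positive semidefiniteness to force $M = 0$. The only cosmetic difference is that the paper phrases the last step via a Cholesky decomposition rather than your direct quadratic-form test, and you are slightly more explicit about the symmetry of $\XX\A$; otherwise the arguments coincide.
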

\begin{proof}
As in Lemma \ref{lem:abs_sens2}, choose $\overline{(I_n)}_i$ to be the $\langle.,.\rangle_{\frac{1}{x}}$-orthogonal projection of $e_i$ to $\Img[S]$.
Then, by the lemma, the matrix $\mathrm{Cov}(I_n) - \XX \A$ has only zero entries on the diagonal.
As it is positive semidefinite by Theorem \ref{thm:CRB}, it admits a Cholesky decomposition, which implies that the matrix is identically zero.
\end{proof}

\begin{remark} \label{rmk:geom2}
The explicit characterization of the absolute sensitivity $\aij = x_j \langle \pi(e_j), \pi(e_i) \rangle_{\frac{1}{x}}$ as stated in Theorem \ref{thm:matrix_abs_sens} can be recovered by geometrical arguments alone.
According to Remark \ref{rmk:geometry_2}, the vector $D_{\eta} \beta (U^T e_i)$ is the unique tangent vector of $\Vss$ which satisfies $D_{\eta} \beta (U^T e_i) - e_i \in \Img[S]$.
Now, the tangent space $T_x \Vss = \X \Ker[S^T]$ is $\langle.,.\rangle_{\frac{1}{x}}$-orthogonal to $\Img[S]$ and therefore $D_{\eta} \beta (U^T e_i)$ must be the $\langle.,.\rangle_{\frac{1}{x}}$-orthogonal projection of $e_i$ to $\X \Ker[S^T]$, i.e.,
\begin{equation} \label{eq:Delta_x_diff}
    D_{\eta} \beta (U^T e_i) = \pi(e_i).
\end{equation}
The $i$th component of a vector $v$ is given by $x_i \langle e_i, v \rangle_{\frac{1}{x}}$ and from Eq. (\ref{eq:Delta_x_diff}) one rederives the result from Theorem \ref{thm:matrix_abs_sens}
\begin{equation} \label{eq:abs_sens_rmk}
     \aij = [D_{\eta} \beta (U^T e_i)]_j = x_j \langle e_j, \pi( e_i) \rangle_{\frac{1}{x}} = x_j  \langle \pi(e_j), \pi( e_i) \rangle_{\frac{1}{x}},
\end{equation}
where the last equality follows from the fact that the linear form $\langle ., \pi( e_i) \rangle_{\frac{1}{x}}$ vanishes on $\Img[S]$ and thus $\langle e_j, \pi( e_i) \rangle_{\frac{1}{x}} = \langle \pi(e_j), \pi( e_i) \rangle_{\frac{1}{x}}$.\\

For a general steady state manifold $\Vss$, an analogue of the Theorem \ref{thm:matrix_abs_sens} can be formulated as follows:
Choose a symmetric and positive definite bilinear form $g$ on $T_xX$ such that $\Img[S]$ is orthogonal to $T_x\Vss$ with respect to the inner product $\langle u,v \rangle_g := u^Tgv$.
Then, $\aij$ is the $j$th component of $\pi (e_i)$, where $\pi : T_xX \rightarrow T_x \Vss$ is the $\langle .,. \rangle_g$-orthogonal projection.
This yields
\begin{equation} \label{eq:aij_general}
    \aij = \langle g^{-1}e_j, \pi (e_i) \rangle_g = \langle \pi (g^{-1}e_j), \pi (e_i) \rangle_g,
\end{equation}
where $g^{-1}$ is the inverse of the matrix representing $g$.
\end{remark}

Finally, for quasi-thermostatic CRN, the range for the absolute sensitivities $\ai$ is determined.
\begin{corollary}
    For quasi-thermostatic CRN, the absolute sensitivities $\ai$ satisfy
    \begin{equation*}
        \ai \in [0,1].
    \end{equation*}
\end{corollary}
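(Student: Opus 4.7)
The plan is to derive both bounds directly from Lemma \ref{lem:abs_sens2}, which gives $\alpha_i = x_i \lVert \pi(e_i) \rVert_{\frac{1}{x}}^2$. Since $x_i > 0$ and any squared norm is nonnegative, the lower bound $\alpha_i \geq 0$ is immediate.

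For the upper bound $\alpha_i \leq 1$, I would invoke the standard fact that an orthogonal projection is non-expansive with respect to the inner product defining it. Here $\pi$ is the $\langle .,. \rangle_{\frac{1}{x}}$-orthogonal projection onto $\X \Ker[S^T]$, so $\lVert \pi(e_i) \rVert_{\frac{1}{x}}^2 \leq \lVert e_i \rVert_{\frac{1}{x}}^2$. A direct computation of the right-hand side using the definition of $\langle .,. \rangle_{\frac{1}{x}}$ gives $\lVert e_i \rVert_{\frac{1}{x}}^2 = \sum_k \frac{1}{x_k} (e_i)_k (e_i)_k = \frac{1}{x_i}$. Multiplying through by $x_i$ yields $\alpha_i \leq 1$.

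Nothing in this argument is subtle; the only thing to be careful about is to state the non-expansiveness of $\pi$ with respect to the correct inner product (the one for which $\pi$ is orthogonal), which is $\langle .,. \rangle_{\frac{1}{x}}$ and not the standard one. If one wanted to make this completely explicit, one could decompose $e_i = \pi(e_i) + (e_i - \pi(e_i))$ with the two summands $\langle .,. \rangle_{\frac{1}{x}}$-orthogonal (the second lying in $\Img[S]$ by the construction in Section \ref{sec:tightening}), and then Pythagoras gives $\lVert e_i \rVert_{\frac{1}{x}}^2 = \lVert \pi(e_i) \rVert_{\frac{1}{x}}^2 + \lVert e_i - \pi(e_i) \rVert_{\frac{1}{x}}^2 \geq \lVert \pi(e_i) \rVert_{\frac{1}{x}}^2$. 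Thus the main obstacle is essentially nonexistent once Lemma \ref{lem:abs_sens2} is in hand: the corollary is a one-line consequence of the non-expansiveness of orthogonal projection combined with the explicit value $\lVert e_i \rVert_{\frac{1}{x}}^2 = 1/x_i$.
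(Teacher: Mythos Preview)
Your proof is correct. The lower bound $\alpha_i\geq 0$ is obtained exactly as in the paper, from Lemma~\ref{lem:abs_sens2}. For the upper bound you use the non-expansiveness of the $\langle.,.\rangle_{\frac{1}{x}}$-orthogonal projection $\pi$ together with $\lVert e_i\rVert_{\frac{1}{x}}^2=1/x_i$, whereas the paper instead reads off $\alpha_i\leq 1$ from the diagonal of the Cramer--Rao inequality~(\ref{eq:CRBchi}) with $\overline{V}=0$ (cf.\ the remark at the start of Section~\ref{sec:tightening}). The two arguments are essentially the same statement viewed from different ends: once Lemma~\ref{lem:abs_sens2} is inserted, the diagonal of~(\ref{eq:CRBchi}) with $\overline{V}=0$ \emph{is} the inequality $\lVert e_i\rVert_{\frac{1}{x}}^2\geq\lVert\pi(e_i)\rVert_{\frac{1}{x}}^2$. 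Your route has the small expository advantage of needing only Lemma~\ref{lem:abs_sens2} and elementary linear algebra rather than re-invoking Theorem~\ref{thm:CRB}.
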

\begin{proof}
    The bound arises $\ai \leq 1$ from comparing the diagonal elements in the inequality (\ref{eq:CRBchi}) and the bound $\ai \geq 0$ follows from the characterization $\alpha_i = x_i \lVert \pi(e_i) \rVert_{\frac{1}{x}}^2$ in Lemma \ref{lem:abs_sens2}.
\end{proof}

\begin{remark}
The absolute sensitivities for quasi-thermostatic CRN have the following symmetry property
\begin{equation*}
    x_i \aij = x_j \aji,
\end{equation*}
which is known as the condition of detailed balance for linear CRN with reaction rates $\aij$.
This condition characterizes equilibrium states and it would be interesting to investigate whether this analogy can be used to transfer results for equilibrium CRN to gain more insights into the properties of absolute sensitivities.
\end{remark}

\section{Example} \label{sec:example}

As an example, consider the core module of the IDHKP-IDH (IDH = isocitrate dehydrogenase, KP = kinase-phosphatase) glyoxylate bypass regulation system shown in the following reaction scheme:
\begin{equation} \label{eq:ACR}
\begin{tikzcd}
    \textrm{E} + \textrm{I}_{\textrm{p}}  \ar[r, rightharpoonup, shift left=.35ex,"k_1^+"] & \textrm{EI}_{\textrm{p}} \ar[l,shift left=.35ex, rightharpoonup,"k_1^-"] \ar[r, rightharpoonup, shift left=.35ex,"k_2^+"] & \textrm{E} + \textrm{I} \ar[l,shift left=.35ex, rightharpoonup,"k_2^-"] \\
    \textrm{EI}_{\textrm{p}} + \textrm{I} \ar[r, rightharpoonup, shift left=.35ex,"k_3^+"] & \textrm{EI}_{\textrm{p}}\textrm{I} \ar[l,shift left=.35ex, rightharpoonup,"k_3^-"] \ar[r, rightharpoonup, shift left=.35ex,"k_4^+"] & \textrm{EI}_{\textrm{p}} + \textrm{I}_{\textrm{p}}. \ar[l,shift left=.35ex, rightharpoonup,"k_4^-"]
\end{tikzcd}
\end{equation}
Here, I is the IDH enzyme, $\textrm{I}_{\textrm{p}}$ is its phosphorylated form, and E is the bifunctional enzyme IDH kinase-phosphatase.
The system has experimentally been shown to obey approximate concentration robustness in the IDH enzyme I \cite{laporte1985compensatory}.
In \cite{shinar2009robustness,shinar2010structural}, it was shown that if the CRN obeys mass action kinetics and the rate constants $k_2^-$ and $k_4^-$ are zero, i.e., if the respective fluxes vanish identically, then there is absolute concentration robustness in the IDH enzyme.

\subsection{Absolute sensitivity in the case of complex balancing}
However, the vanishing of fluxes is not consistent with thermodynamics and therefore, in this example, the concept of absolute sensitivity is used to analyze under which circumstances approximate concentration robustness persists in the chemical I if all reactions are reversible, i.e., the rate constants $k_2^-$ and $k_4^-$ are nonzero.
Abbreviate the chemicals as $X_1 = \textrm{E}, X_2 = \textrm{I}_{\textrm{p}}, X_3 = \textrm{EI}_{\textrm{p}}, X_4 = \textrm{I}, X_5 = \textrm{EI}_{\textrm{p}}\textrm{I}$ and use $x_i, i =1,\dotsc,5$ for the respective concentrations.
Under the assumption of complex-balancing\footnote{Note that the CRN has deficiency 1 and therefore the locus of rate constants for which the CRN is complex balanced is of codimension 1 in the space of all rate constants \cite{craciun2009toric}.
In this particular case, the condition on the rate constants is $k_1^+ k_2^+ k_3^+ k_4^+ = k_1^- k_2^- k_3^- k_4^-$.}, the absolute sensitivity for $X_4$ can be given in an analytically closed form based on the formula (\ref{eq:abs_sens_rmk}), i.e., $\alpha_4 = x_4 \langle e_4, \pi(e_4) \rangle$.
Explicitly, this yields
\begin{equation} \label{eq:a4}
    \alpha_4 = \frac{1}{1+r},
\end{equation}
where $r$ is given by the ratio
\begin{equation*}
    r = \frac{\left(x_2 + x_5 \right) \left(x_1 + x_3 \right) + x_1 \left( x_3 + 3x_5\right) +x_2x_5}{x_4 \left(x_1 + x_3 + x_5 \right)}.
\end{equation*}
A derivation of this expression is given in the Supplementary Material, Section \ref{app:details}.
Note that this expression is valid for any $x \in X$ whenever $x$ is a steady state point which satisfies complex balancing.
By adjusting the kinetic parameters, any point $x \in X$ can be made into such a point \cite{kobayashi2022kinetic}.

\subsection{Interpretation with respect to concentration robustness}
The functional form (\ref{eq:a4}) provides a rather straightforward understanding of the behavior  of the absolute sensitivity of $X_4$ because it is governed solely by the ratio $r$.
For $r \gg 0$, i.e., $$\left(x_2 + x_5 \right) \left(x_1 + x_3 \right) + x_1 \left( x_3 + 3x_5\right) +x_2x_5 \gg x_4 \left(x_1 + x_3 + x_5 \right),$$ the complex balanced CRN with reversible reactions is able to achieve very low sensitivities in $X_4$ and therefore mimic the behavior  of the irreversible CRN with absolute concentration robustness.
This is the case, for example, for $x_1 \approx x_2 \approx x_3 \approx x_5 \gg x_4$, for $x_2 \gg x_1 \approx x_4 \approx x_3 \approx x_5$ as well as for $x_1 \approx x_3 \gg x_2 \approx x_4 \approx x_5$, etc.
However, the CRN can also be very sensitive in $X_4$ whenever $r$ is close to $0$.
This happens, for example, when $x_4 \gg x_1 \approx x_2 \approx x_3 \approx x_5$.

\subsection{Numerical tests} \label{sec:numerical}
\paragraph{Sampling the whole concentration space}
To illustrate the capabilities of the absolute sensitivity concept, we have performed numerical tests with the CRN (\ref{eq:ACR}).
The absolute sensitivities $\alpha_i$ for all five chemicals of the CRN were computed at random points $x \in X$ in the range physiologically realistic concentrations, i.e., by uniform sampling of $\log x_i \in [-5,0]$.
The computation of the $\alpha_i$ was performed based on the formula (\ref{eq:Jacobian_final}):
\begin{equation*}
  \alpha_i = \left[\X \VV \cdot [\UU^T \X \VV]^{-1} U^T \right]_{ii}.
\end{equation*}
Figure \ref{fig:numerics} shows the resulting distributions of the $\alpha_i$ for 1.000.000 trials.
The mean values and the standard deviations are recorded in Table \ref{table:numerics}.
The results are rather surprising:
The IDH enzyme I and its phosphorylated form I$_{\textrm{p}}$ have the same values of absolute sensitivity: $\alpha_2, \alpha_4 \approx 0.295$.
Moreover, the distributions of $\alpha_2$ and $\alpha_4$ are very similarly shaped, with a clearly visible bias towards low values and with a decreasing number of occurrences towards higher values.
In contrast, the bifunctional enzyme E and its complex EI$_{\textrm{p}}$I with two IDH enzymes have rather high absolute sensitivity values: $\alpha_1, \alpha_5 \approx 0.514$.
Again, the distributions of $\alpha_1$ and $\alpha_5$ look almost identical.
This suggest a symmetry in the CRN with respect to sensitivity which is not immediately visible from the reaction scheme (\ref{eq:ACR}) - in particular E and EI$_{\textrm{p}}$I do not exhibit any obvious symmetry relation.
Notice that for a fixed concentration vector $x$, the relations $\alpha_2 = \alpha_4$ and $\alpha_1 = \alpha_5$ certainly do not hold true, cf. some sample results in Section \ref{SIsec:sample_alpha}.
Hence, the symmetry appears at the level of distributions.
These findings warrant further investigation in the future.

This example shows that absolute sensitivity can be easily computed numerically by using the linear algebraic characterization derived in this article, and how the concentration space can be sampled to get an intuition for the general sensitivity properties of the CRN.
For the core module of the IDHKP-IDH glyoxylate bypass regulation system it was indeed found that the IDH enzyme is rather robust in its concentration, which is in agreement with the known experimental facts.

To conclude this example, it is worth emphasizing that the absolute sensitivities are not only basis independent but that the explicit form (\ref{eq:a4}) allows to understand the absolute sensitivities of the chemicals on the whole concentration space without making a particular choice of kinetic rate constants.
However, notice that the rate constant enter implicitly as they determine state steady state manifold, and they have to be chosen such that the complex balancing condition is fulfilled.

\begin{figure}
    \centering
    \includegraphics[scale=0.4]{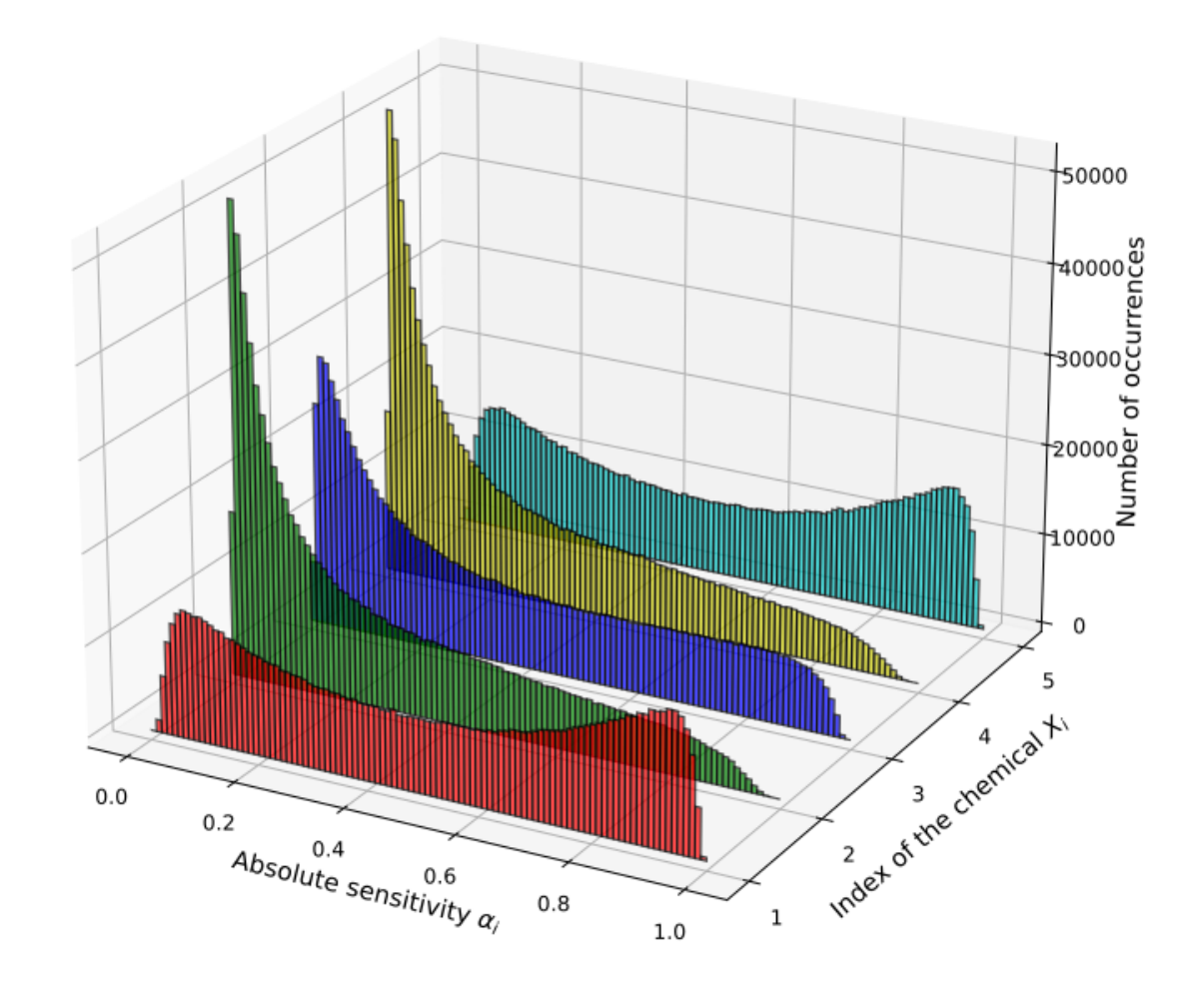}
    \caption{Distributions of the absolute sensitivities $\alpha_i$ for randomly sampled concentration vectors in $\log x^i \in [-5,0]$.
    Each histogram consists of 1.000.000 data points, distributed in 100 bins.
    }
    \label{fig:numerics}
\end{figure}
\begin{table}[htbp]
\footnotesize
  \caption{Mean values $\mu (\alpha_i)$ and standard deviations $\sigma (\alpha_i)$ for the distributions shown in Figure \ref{fig:numerics}.
    }\label{table:numerics}
    \begin{center}
    \begin{tabular}{|c||c|c|}
        \hline
        Chemical $X_i$ & $\mu (\alpha_i)$ & $\sigma (\alpha_i)$ \\
        \hline
        \hline
        $X_1$ (E) & 0.514 & 0.299 \\
        \hline
        $X_2$ (I$_{\textrm{p}}$) & 0.295 & 0.261 \\
        \hline
        $X_3$ (EI$_{\textrm{p}}$) & 0.383 & 0.291 \\
        \hline
        $X_4$ (I) & 0.295 & 0.261 \\
        \hline
        $X_5$ (EI$_{\textrm{p}}$I) & 0.514 & 0.300 \\
        \hline
    \end{tabular}
    \end{center}
\end{table}

\paragraph{Exploring fixed steady state manifolds}
In the previous paragraph, the whole concentration space was sampled whereby each random concentration vector was assumed to be a steady state vector.
This corresponds to varying steady state manifolds and thus varying the kinetic rate constants.
In the Supplementary Material, Section \ref{app:numerical}, it is demonstrated how to sample a fixed steady state manifold.
For any concrete choice of kinetic rate constants, the steady state manifold is two-dimensional and can, in general, be parametrized by the two conserved quantities $\eta_1:= x_1 + x_3 + x_5$ and $\eta_2 := x_2 + x_3 + x_4 + 2x_5$.
Some examples of how the absolute sensitivity $\alpha_4$ depends on $(\eta_1,\eta_2)$ for concrete choices of kinetic rate constants are given in the Supplementary Material, Section \ref{app:numerical}.
Notice that the parametrization $x^{ss}(\eta) = \beta(\eta)$ is accessible only numerically\footnote{It requires the solution of a polynomial of degree 5 in $x_3$ with constants in $\mathbb{Q}[k_i^{\pm},\eta_1,\eta_2]$.
In general, the situation is even more complex.}.\\

\section{Outlook} \label{sec:discussion}
The central technical contribution of this article is the derivation of the multivariate  Cramer-Rao bound, as presented in Section \ref{sec:CRB}.
Here, it has been derived by using linear algebra.
However, the natural setting for it is the recently developed information geometry for CRN \cite{yoshimura2021,kobayashi2022kinetic, sughiyama2022hessian,kobayashi2023information,loutchko2022}.
The information geometry of CRN is motivated by and can be seen as an analytic extension of the global algebro-geometric viewpoint of CRN theory \cite{craciun2008identifiability,craciun2009toric,gross2020joining,craciun2022disguised}.
There, the Cramer-Rao bound is obtained through the comparison of two Riemannian metrics.
One of the metrics is given by the Hessian of a strictly convex function on the space $X$ and the second one is its restriction to $\Vss \subset X$, hence the inequality.
Several properties that seem like coincidences in the setup of this article turn out to be valid in the Hessian geometric setup:
The metric $\XX$ generalizes to a metric $g$ given by the Hessian of a strictly convex function on the concentration space, the orthogonality between $T_x \Vss$ and $\Img[S]$ persists and the bound becomes strict for $\mathrm{Cov}(I_n)_{ij} = \langle \pi(e_j), \pi( e_i) \rangle_g$, where $\pi$ is the $g$-orthogonal projection to $T_x \Vss$ and $\langle.,.\rangle_g$ the respective bilinear product.

Moreover, this article introduces the notion of absolute sensitivity which remedies the basis dependence of the sensitivity matrix $\chi$.
Absolute sensitivities are defined purely through the geometry of concentration space and the embedding of the steady state manifold within it.
For quasi-thermostatic CRN, it has been shown that the absolute sensitivities $\ai$ of the chemicals $X_i$ lie in the interval $[0,1]$ and therefore they can be used to quantify approximate concentration robustness as well as highly sensitive chemicals.
Moreover, it shows how the strict requirements for absolute concentration robustness can be meaningfully relaxed:
Absolute concentration robustness in $X_i$ requires $\aji$ to vanish for all $j$, whereas the vanishing for some $j$ indicates insensitivity of $X_i$ to $X_j$ which is often a biochemically relevant situation.

Finally, going from quasi-thermostatic CRNs to more general CRNs allows $\ai > 1$ and $\aij <0$.
The condition $\ai > 1$ quantifies hypersensitivity in $X_i$.
The condition $\aij <0$ would allow to implement the operation of subtraction via CRN and thus enable various computations, logical circuits, and feedback regulations \cite{khammash2016engineering}.
Note that in \cite{feliu2019}, an example with $\aij <0$ is exhibited, by using the concept of sign-sensitivities and computational algebra techniques.
It would be especially exciting to obtain $\aij <0$ in reversible CRN because then it would be possible to investigate the thermodynamical cost of computation in CRN.
The geometrical characterization of $\aij$ as $\langle \pi (g^{-1}e_j), \pi (e_i) \rangle_g$ provided in Eq. (\ref{eq:aij_general}) is a helpful tool for finding such CRNs in the future.
It enables values of $\aij$ outside of the interval $[0,1]$ not through nonlinearities in kinetics but through non-ideal thermodynamic behavior which can already occur for equilibrium systems.
This generalization of the Cramer-Rao bound and absolute sensitivities to non-ideal systems is being explored in currently ongoing work.

An application of the concept of absolute sensitivity was presented in Section \ref{sec:numerical}:
It is easily possible to sample the absolute sensitivity values on the whole space of physiologically meaningful concentration vectors which span several orders of magnitude.
This is because the absolute sensitivities do not explicitly depend on on the kinetic rate constants but just on the particular point in concentration space, and because the formulae derived in this article are computationally very cheap.
This procedure yields distributions of absolute sensitivity and allows to gauge the robustness behavior of all the chemicals of the CRN.
For the example of the core module of the IDHKP-IDH glyoxylate bypass regulation system the experimentally known robustness in the IDH enzyme could be confirmed.
In addition, a hidden symmetry, which is not present at the level of individual point but at the level of distributions, was uncovered.
We are looking forward to apply the methodology to the analysis of other CRNs in the future.

\section*{Acknowledgments}

This research is supported by JST (JPMJCR2011, JPMJCR1927) and JSPS (19H05799).
Y. S. receives financial support from the Pub-lic\verb|\|Private R\&D Investment Strategic Expansion PrograM (PRISM) and programs for Bridging the gap between R\&D and the IDeal society (society 5.0) and Generating Economic and social value (BRIDGE) from Cabinet Office.
We thank Atsushi Kamimura, Shuhei Horiguchi and all other members of our lab for fruitful discussions.

\bibliographystyle{siamplain}
\bibliography{references}

\unappendix
\pagebreak
\begin{center}
\textbf{\large Supplementary Materials: Cramer-Rao bound and absolute sensitivity in chemical reaction networks}
\end{center}
\setcounter{equation}{0}
\setcounter{figure}{0}
\setcounter{table}{0}
\setcounter{section}{0}
\setcounter{page}{1}
\makeatletter
\renewcommand{\theequation}{S\arabic{equation}}
\renewcommand{\thefigure}{S\arabic{figure}}
\renewcommand{\thesection}{S\arabic{section}}

\noindent 
This Supplementary Material contains the proof of Theorem \ref{thm:CRB} from the main text in Section \ref{SIapp:proof}.
Sections \ref{app:details} and  \ref{app:numerical} provide additional details for the example from Section \ref{sec:example} in the main text.

\section{Proof of Theorem \ref{thm:CRB}} \label{SIapp:proof}

\begin{proof}
   This proof is an adaptation of the proof given in \cite{kay1993}, section 3B, for the Cramer-Rao bound in exponential families.
   Let $U = (u_1,...,u_q)$ be the $n \times q$ matrix of basis vectors of $\Ker[S^T]$ as in Section \ref{sec:sensitivity}.
   Write
   \begin{align*}
       \G^T U = (\G - \overline{\G})^T \XX^{1/2} \X^{1/2} U,
   \end{align*}
   where the equality follows from the orthogonality between $\overline{\G}$ and $U$.
   For any two vectors $a \in \mathbb{R}^n$ and $b \in \mathbb{R}^q$, the Cauchy-Schwarz inequality gives $$\left[ a^T (\G - \overline{\G})^T \XX (\G - \overline{\G}) a \right] \left[ b^T U^T \X U b \right] \geq \left[ a^T \G^T U b \right]^2 ,$$i.e.,
   \begin{equation*}
        \left[ a^T \mathrm{Cov}(\G) a \right] \left[ b^T g_{\eta}^{-1} b \right] \geq \left[ a^T \G^T U b \right]^2.
   \end{equation*}
   Choosing $$b = g_{\eta} U^T \G a$$ gives
   \begin{equation} \label{eq:proof}
   a^T \mathrm{Cov}(\G) a \left[ a^T \G^T U g_{\eta} U^T \G a \right] \geq \left[ a^T \G^T U g_{\eta} U^T \G a \right]^2.
   \end{equation}
   Writing $g_{\eta}^{-1} = [\X^{1/2}U]^T [\X^{1/2}U]$ shows that $g_{\eta}$ is positive semidefinite and thus the expression $$a^T \G^T U g_{\eta} U^T \G a $$ is non-negative.
   The theorem now follows from the inequality (\ref{eq:proof}).
\end{proof}

\section{Computation of $\alpha_4$ in Section \ref{sec:example}} \label{app:details}

The CRN is given in Scheme (\ref{eq:ACR}) by
\begin{equation} \label{SI:CRN1}
\begin{tikzcd}
    \textrm{E} + \textrm{I}_{\textrm{p}}  \ar[r, rightharpoonup, shift left=.35ex,"k_1^+"] & \textrm{EI}_{\textrm{p}} \ar[l,shift left=.35ex, rightharpoonup,"k_1^-"] \ar[r, rightharpoonup, shift left=.35ex,"k_2^+"] & \textrm{E} + \textrm{I} \ar[l,shift left=.35ex, rightharpoonup,"k_2^-"] \\
    \textrm{EI}_{\textrm{p}} + \textrm{I} \ar[r, rightharpoonup, shift left=.35ex,"k_3^+"] & \textrm{EI}_{\textrm{p}}\textrm{I} \ar[l,shift left=.35ex, rightharpoonup,"k_3^-"] \ar[r, rightharpoonup, shift left=.35ex,"k_4^+"] & \textrm{EI}_{\textrm{p}} + \textrm{I}_{\textrm{p}} \ar[l,shift left=.35ex, rightharpoonup,"k_4^-"]
\end{tikzcd}
\end{equation}
and with the abbreviations $X_1 = \textrm{E}, X_2 = \textrm{I}_{\textrm{p}}, X_3 = \textrm{EI}_{\textrm{p}}, X_4 = \textrm{I}, X_5 = \textrm{EI}_{\textrm{p}}\textrm{I}$ introduced in the main text it reads
\begin{equation} \label{SI:CRN2}
\begin{tikzcd}
    X_1 + X_2  \ar[r, rightharpoonup, shift left=.35ex,"k_1^+"] & X_3 \ar[l,shift left=.35ex, rightharpoonup,"k_1^-"] \ar[r, rightharpoonup, shift left=.35ex,"k_2^+"] & X_1 + X_4 \ar[l,shift left=.35ex, rightharpoonup,"k_2^-"] \\
    X_3 + X_4 \ar[r, rightharpoonup, shift left=.35ex,"k_3^+"] & X_5 \ar[l,shift left=.35ex, rightharpoonup,"k_3^-"] \ar[r, rightharpoonup, shift left=.35ex,"k_4^+"] & X_3 + X_2. \ar[l,shift left=.35ex, rightharpoonup,"k_4^-"]
\end{tikzcd}
\end{equation}
The variables $x_i, i =1,\dotsc,5$ are used for the respective concentrations of the chemicals $X_i$.\\

Now, the absolute sensitivity $\alpha_4$ is computed.
The stoichiometric matrix of the CRN is given by
\begin{equation*}
S=  \begin{bmatrix}
        -1 & 1 & 0 & 0 \\
        -1 & 0 & 0 & 1 \\
        1 & -1 & -1 & 1 \\
        0 & 1 & -1 & 0 \\
        0 & 0 & 1 & -1
    \end{bmatrix}
\end{equation*}
and the kernel of $S^T$ is spanned by $v_1 := (-1,1,0,1,1)$ and $v_2 := (1,0,1,0,1)$ and $\X \Ker[S^T]$ is spanned by $w_1 := \X v_1$ and $w_2 := \X v_2$.
To construct a $\langle.,.\rangle_{\frac{1}{x}}$ orthogonal basis $(w_2,w_3)$ of $\X \Ker[S^T]$, let $w_3 := w_2 + \lambda w_1$.
The requirement $\langle w_2, w_3\rangle_{\frac{1}{x}}$ yields
\begin{equation}
    \lambda = \frac{x_1 + x_3 + x_5}{x_1 - x_5}
\end{equation}
in the case that $x_1 \neq x_5$ (if $x_1 = x_5$, then $(w_1,w_2)$ is an $\langle.,.\rangle_{\frac{1}{x}}$-orthogonal basis already).
Now the projection $\pi(e_4)$ is given by
\begin{equation*}
    \pi(e_4) = \frac{\langle e_4, w_2 \rangle_{\frac{1}{x}}}{\langle w_2, w_2 \rangle_{\frac{1}{x}}} w_2 + \frac{\langle e_4, w_3 \rangle_{\frac{1}{x}}}{\langle w_3, w_3 \rangle_{\frac{1}{x}}} w_3 = \frac{\lambda}{\langle w_3, w_3 \rangle_{\frac{1}{x}}}w_3.
\end{equation*}
One computes $\langle w_3, w_3 \rangle_{\frac{1}{x}} = \lambda (x_5-x_1) + \lambda^2 (x_1 + x_2 + x_4 + x_5)$ and obtains, using the formula (\ref{eq:abs_sens_rmk}) from Remark \ref{rmk:geom2} for the absolute sensitivity\footnote{In the case $x_1 = x_5$, the analogous calculation can be done with $w_1$ instead of $w_3$ and the final results still holds true.}
\begin{align*}
    \alpha_4 &= x_4 \langle e_4, \pi(e_4) \rangle_{\frac{1}{x}} = \frac{x_4\lambda}{\langle w_3, w_3 \rangle_{\frac{1}{x}}} \langle e_4, w_3 \rangle_{\frac{1}{x}} = \frac{x_4\lambda^2}{\langle w_3, w_3 \rangle_{\frac{1}{x}}} \\
    &= \frac{x_4}{\lambda^{-1}(x_5-x_1) + (x_1 + x_2 + x_4 + x_5)} \\
    &=  \frac{x_4(x_1+x_3+x_5)}{(x_1+x_3+x_5) (x_1 + x_2 + x_4 + x_5)- (x_5-x_1)^2},
\end{align*}
which can be rearranged to yield the expression in the main text.

\section{Numerical examples} \label{app:numerical}

Here, the behavior of the absolute sensitivity $\alpha_4$ for the example in Section \ref{sec:example} is analyzed for different sets of kinetic rate constants.
In particular, it is known that absolute concentration robustness is achieved when the reactions $\textrm{EI}_{\textrm{p}} \rightarrow \textrm{E} + \textrm{I}$ and $\textrm{EI}_{\textrm{p}}\textrm{I} \rightarrow \textrm{EI}_{\textrm{p}} + \textrm{I}_{\textrm{p}}$ are irreversible \cite{shinar2010structural}.
Therefore, the focus of the numerical examples will be to investigate the effect of the smallness of the rate constants $k_2^-$ and $k_4^-$.

The assumption of complex balancing forces the CRN to be an equilibrium CRN as there are no cycles in the graph of complexes and reactions.
This implies the condition
\begin{equation}
    k_1^+ k_2^+ k_3^+ k_4^+ = k_1^- k_2^- k_3^- k_4^-
\end{equation}
on the rate constants.
In order to fix the rate constants for concrete examples, it is assumed that the order of all reactions is of a similar scale, i.e., $k_1^+ = k_2^+ = k_3^+ = k_4^+ = 1$. This leads to the condition $k_1^- k_2^- k_3^- k_4^- = 1$.
Moreover, it is assumed that the irreversibility is governed by
\begin{equation}
    \epsilon := k_2^- = k_4^-.
\end{equation}
For the condition close to irreversibility, $\epsilon = 0.01$ is chosen and contrasted with the case $\epsilon = 0.1$.
The complex balancing condition now reads $\epsilon^2 k_1^- k_3^- = 1$ and the three cases $k_1^- =  k_3^-$, $k_1^- = 100 k_3^-$, and $100 k_1^- =  k_3^-$ are considered.
The two conserved quantities
\begin{align}
\label{eq:eta1}
    \eta_1 &:= x_1 + x_3 + x_5 \\
\label{eq:eta2}
    \eta_2 &:= x_2 + x_3 + x_4 + 2x_5
\end{align}
quantify the total amount of $\textrm{E}$ and $\textrm{I}$, respectively and are natural parameters for the steady-state manifold.
The dependence of the absolute sensitivity $\alpha_4$ on $(\log \eta_1, \log \eta_2)$ is shown in Fig. \ref{fig:example}.
These results show that there is very little difference between $\epsilon = 0.01$ and $\epsilon = 1$.
However, the quotient $k_1^- / k_3^-$ has a strong influence on $\alpha_4$ and for $k_1^- \gg 100 k_3^-$ the lowest values for the absolute sensitivity are obtained.
These values are consistently low for all physiologically relevant values of $(\eta_1, \eta_2)$, where both conserved quantities range from $10^{-10}$ to $10$.

\begin{figure}
    \centering
    \includegraphics[scale=0.5]{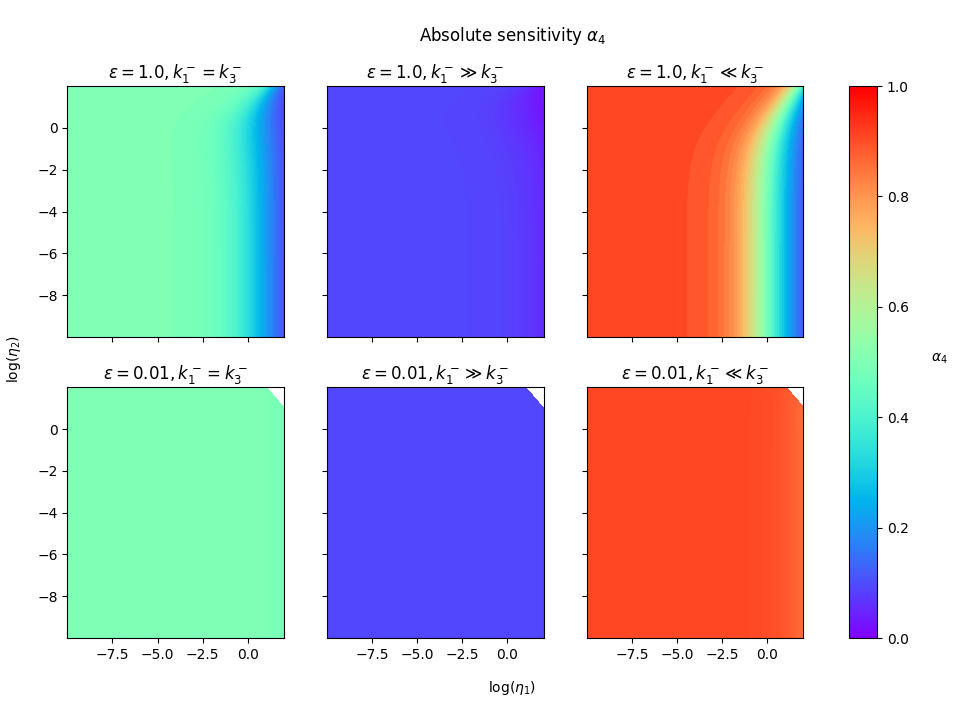}
    \caption{The dependence of the absolute sensitivity $\alpha_4$ on the kinetic rate constants.
    For this example, $k_1^+ = k_2^+ = k_3^+ = k_4^+ = 1$ are fixed, $\epsilon = k_2^- = k_4^-$ is varied as $\epsilon = 0.01$ and $\epsilon = 1$ and the three cases $k_1^- =  k_3^-$, $k_1^- = 100 k_3^-$, and $100 k_1^- =  k_3^-$ are considered under the condition $\epsilon^2 k_1^- k_3^- = 1$.
    The steady state manifold is two-dimensional and parametrized by $(\log \eta_1, \log \eta_2)$.
    }
    \label{fig:example}
\end{figure}

\subsection{Numerical procedure}

As the parametrization $x(\eta)$ is not available analytically, for each set of rate constants, a base point $x^{ss}$ is determined numerically by choosing arbitrary points $x \in X$ for $\eta_1 = \eta_2 =1$ and then by running mass action kinetics until convergence to $x^{ss}$.
The parametrization of the steady state manifold according to (\ref{eq:quasiTS_Vss1}), i.e.,
\begin{align}
    \gamma: \R^q &\rightarrow X \\
    \nonumber
    \lambda &\mapsto x^{ss} \circ \exp(\VV \lambda).
\end{align}
is used for $\lambda \in [10^{-15},10^3]^2$.
For each point $x = \gamma(\lambda)$, the absolute sensitivity $\alpha_4$ is computed according to the formula (\ref{eq:a4}) and the conserved quantities according to (\ref{eq:eta1}) and (\ref{eq:eta2}).
The results are shown in Fig. \ref{fig:example}.

\subsection{Sample vectors for Figure \ref{fig:numerics}} \label{SIsec:sample_alpha}
Each row in the following matrix corresponds to a tuple $(\alpha_1, \dotsc, \alpha_5)$ for a random concentration vector whose logarithms were sampled from a uniform distribution in the range $\log x_i \in [-5,0]$:
\begin{equation}
\begin{matrix}
0.571859 & 0.415994 & 0.499037 & 0.322429 & 0.190682\\
0.977996 & 0.052549 & 0.115131 & 0.166512 & 0.687813\\
0.837183 & 0.332138 & 0.374417 & 0.020179 & 0.436083\\
0.927398 & 0.069571 & 0.048338 & 0.048623 & 0.906070\\
0.466231 & 0.680558 & 0.532613 & 0.036203 & 0.284395\\
0.117028 & 0.314511 & 0.749724 & 0.279061 & 0.539675\\
0.433033 & 0.566505 & 0.510908 & 0.067377 & 0.422177\\
0.253371 & 0.618782 & 0.655395 & 0.215093 & 0.257359\\
0.942641 & 0.188410 & 0.069260 & 0.101602 & 0.698087\\
0.866876 & 0.026484 & 0.021009 & 0.736457 & 0.349174\\
0.036370 & 0.010242 & 0.913305 & 0.925176 & 0.114908\\
0.298352 & 0.067670 & 0.200336 & 0.669644 & 0.763998\\
0.568270 & 0.182549 & 0.247499 & 0.212768 & 0.788914\\
0.269326 & 0.668382 & 0.679059 & 0.036954 & 0.346279\\
0.182700 & 0.740687 & 0.574043 & 0.094458 & 0.408112\\
0.929341 & 0.005203 & 0.006780 & 0.156382 & 0.902294\\
0.693083 & 0.052616 & 0.625947 & 0.201384 & 0.426970\\
0.792792 & 0.074152 & 0.310112 & 0.494458 & 0.328485\\
0.672740 & 0.028854 & 0.363228 & 0.046930 & 0.888248\\
0.055413 & 0.199745 & 0.783556 & 0.629825 & 0.331460.\\
\end{matrix}.
\end{equation}
This shows that there is no symmetry of the form $\alpha_2 = \alpha_4$ and $\alpha_1 = \alpha_5$ for a fixed point $x$.
Hence, the symmetry seen in Figure \ref{fig:numerics} must appear at the level of distributions.

\end{document}